\newcommand{\E}{\mathbb E}
\newcommand{\R}{\mathbb R}
\newcommand{\N}{\mathbb N}
\newcommand{\F}{\mathcal F}
\newcommand{\ep}{\epsilon}
\newcommand{\ud}{\,\mathrm{d}}
\newcommand{\vd}{\mathrm{d}}
\newcommand{\lt}{\left}
\newcommand{\rt}{\right}
\newcommand{\pt}{\partial}
\newcommand{\tend}{\rightarrow}
\newcommand{\into}{\rightarrow}
\newcommand{\gm}{\gamma}
\newcommand{\Dr}{X} 
\def\P{{\mathbb P}}
\def\Q{{\mathbb Q}}
\newcommand{\Ind}{\mathbbm{1}}
\newcommand{\supp}{\mathop{\mathrm{supp}}\nolimits}
\newcommand{\Var}{\mathop{\mathrm{Var}}\nolimits}
\numberwithin{equation}{section}
\theoremstyle{plain}
\newenvironment{example}[1][Example]{\begin{trivlist}
\item[\hskip \labelsep {\bf Example}]}{\end{trivlist}}
\newenvironment{remark}[1][Remark]{\begin{trivlist} \item[\hskip \labelsep {\bf Remark}]}
{\end{trivlist}}
\newtheorem{theorem}{Theorem}[section]
\newtheorem{lemma}[theorem]{Lemma}
\newtheorem{corollary}[theorem]{Corollary}
\newtheorem{proposition}[theorem]{Proposition}
\newtheorem{assump}[theorem]{Assumption}
\theoremstyle{definition}
\title{Optimal liquidation of an asset under drift uncertainty}
\author{Erik Ekstr\"om and Juozas Vaicenavicius}
\date{}
\begin{document}

\maketitle
\begin{abstract}
We study a problem of finding an optimal stopping strategy to liquidate an asset with unknown drift. Taking a Bayesian approach, we model the initial beliefs of an individual about the drift 
parameter by allowing
an arbitrary probability distribution to characterise the uncertainty about the drift parameter. 
Filtering theory is used to describe the evolution of the posterior beliefs about the drift once the price process is being observed. 
An optimal stopping time is determined as the first passage time of the posterior mean
below a monotone boundary, which can be characterised as the unique solution to a non-linear integral equation. We also study monotonicity properties with respect to the prior distribution and the asset volatility.
\end{abstract}

\section{Introduction}
It is an inevitable feature of human economic activity that prices of goods vary in time. Thus, naturally, a person participating in trade cares much about the best time to perform a transaction. Let us think about an individual who possesses an indivisible asset with price evolution $\{S_t\}_{t\geq0}$ and wants to sell it before time $T\geq 0$. 
Assuming a liquid market, how should the seller choose a selling time to 
maximise his/her profit from the sale?
Mathematically, the question is about finding a stopping time $\tau^*$, belonging to a set of admissible stopping times $\mathcal{T}_T$, such that
\begin{equation} \label{E:sell}
	\E[S_{\tau^*}] = \sup_{\tau \in \mathcal{T}_{T}} \E[S_\tau].
\end{equation}
A natural set of admissible stopping times $\mathcal{T}_{T}$ to consider is the set $\mathcal{T}^{S}_{T}$ of stopping times with respect to the price process $S$, i.e.~at any point in time, the decision whether to sell the asset or not must be based solely on the price history of $S$. 
Throughout this article we assume $\mathcal{T}_{T}=\mathcal{T}^{S}_{T}$.

In the context of the classical Black-Scholes model
\begin{equation} \label{E:Price}
	\ud S_t = \alpha S_t \ud t + \sigma S_t \ud W_t \,, 
\end{equation}
where $\alpha$, $\sigma$ are known constant parameters, the answer to the optimal selling question \eqref{E:sell} is straightforward:
if $\alpha>0$, then the optimal strategy is to sell at the terminal time $T$;
if $\alpha< 0$,  then the optimal strategy is to sell immediately, i.e. at time $0$;
if $\alpha=0$, then any stopping time $\tau$ is optimal.

However, in applications, the known constant drift assumption of the Black-Scholes model
is usually too strong. 
To obtain reasonable precision when estimating the drift one needs very long time-series, which are rarely available. An extreme example of the lack of data is a stock of an initial public offering (IPO) for which the price history simply does not exist. Furthermore, even in those few cases where enough past data is available, the benefit of accurate calibration of the historical drift is most likely to be outweighed by the model risk introduced. This is because most financial models, including Black-Scholes, are only plausible as short-term models; the simplistic constant parameter assumptions are non-viable over longer time periods. On the other hand, the assumption of known volatility parameter $\sigma$ is justifiable as it can be estimated, at least in theory, from an arbitrarily short observation period. 

Though the notoriety of the drift estimation pushed much of financial mathematics literature to focus on questions where the drift parameter can be avoided or at least does not play a crucial role (e.g.~risk-neutral pricing and hedging), in the optimal liquidation problem, the drift can have a noticeable effect. Figure \ref{Table}, containing the estimated Black-Scholes model parameters of a few famous IPOs over the first year since going public, suggests that it is unlikely that the price change in all these cases was due to the volatility alone, leading us to believe in the significance of the drift contribution that needs to be addressed.

\begin{figure}[h!] \label{Table}
\begin{center}
  \begin{tabular}{ l | c | c | c }
    IPO & $\log(S_{1}/S_{0})$ & $\hat \alpha$ & $\hat \sigma$ \\
    \hline \hline
    Amazon (1997)& 1.34 & 1.68 & 0.83 \\ \hline
    Google (2004) & 1.03 & 1.11 & 0.41 \\ \hline
    Facebook (2012)& -0.42 & -0.27 & 0.55 \\ \hline
	Vonage (2006) & -1.53 & -1.29 & 0.70 \\ \hline
  \end{tabular}
\end{center}
\caption{The estimates $\hat \alpha$ and $\hat \sigma$ of the drift $\alpha$ and the volatility $\sigma$ are calculated over the first year of an IPO using the daily closing prices. Data source: Google Finance. }
\end{figure}

Admission that due to unattainable calibration, in many situations, modelling a price by a geometric Brownian motion with a known constant drift is ill-suited is not a reason to give up modelling, but a mere indication that the model should be improved to incorporate extra factors. As the exact value of the drift parameter is unknown, we choose to model the inherent uncertainty about the drift by a probability distribution. More precisely, we extend the geometric Brownian motion model \eqref{E:Price} by replacing the constant drift $\alpha$ by a random variable whose distribution (called `a prior' in Bayesian statistics) incapsulates all the knowledge available to us concerning the uncertainty about the drift. As far as the volatility $\sigma$ is concerned, we stay with the known constant volatility assumption.

A potential practical application of this drift uncertainty modelling is in the optimal liquidation of an IPO share. A person possessing a share of an IPO has only beliefs about the drift of the price process as no past price data is available to calibrate the model. 
Though in this article we view the prior distribution as subjective beliefs whose origin we do not question, one can also think of transparent constructive approaches for choosing a prior. A possible approach in the IPO example is to use the empirical distribution of the returns of similar IPOs over the initial period of the same length as our investment horizon $T$. The similarity criteria could be the market sector, the country, the market share, etc.

For an agent interested in optimally liquidating only the idiosyncratic (i.e. stock-specific) component of a stock price, the sequential procedure is even more beneficial due to reduced volatility (relatively high volatility in Figure \ref{Table} could be seen as diminishing the advantage of the sequential liquidation procedure). A simple structural Black-Scholes model including an idiosyncratic and a market factor  (e.g.~see \cite{BH11}) suggests that the idiosyncratic price component 
\[
S_{t}/I_{t} \propto e^{(\mu' - \sigma'^{2}/2)t + \sigma' W'_{t}}.
\]
Here $I_{t}$ is a large-basket index representing the market factor, $\mu', \sigma', W'$ denote the idiosyncratic drift, the idiosyncratic volatility, and the idiosyncratic random driver, respectively. As $\sigma'< \sigma$, learning about the idiosyncratic drift $\mu'$ is faster than learning about the total drift $\mu$, so applying the sequential procedure in such a situation is even more advantageous than in the standard case.   

In this article, we solve the optimal liquidation problem \eqref{E:sell} within the proposed model under an arbitrary prior distribution for the drift. The first time the posterior mean of the drift passes below a specific non-decreasing curve is shown to be optimal; the stopping boundary is characterised as the unique solution of a particular integral equation.

To include more details, our investigation of the optimal strategy can be briefly described in the following. The original problem with incomplete information about the drift is reformulated as a complete information problem by projecting the price evolution onto the observable filtration using  filtering theory. 
The mean of the posterior distribution becomes the underlying process of a new equivalent optimal stopping problem with a stochastic killing/creation rate and a constant payoff function. This conditional mean is shown to satisfy a stochastic differential equation driven by the innovation process. The dispersion coefficient of the SDE is proved to be decaying in time as well as satisfy a special condition on the second spatial derivative. Embedding the value function into a Markovian framework and making a suitable connection with the term-structure equation, the established dispersion function properties enable us to employ the available convexity results to prove convexity of the Markovian value function in the spatial variable. Moreover, the value function is shown to be continuous and decreasing in time. These significant facts allow us to show that the first passage time below a monotone boundary is an optimal stopping time, so techniques from the theory of free-boundary problems with monotone boundaries can be applied. Specifically, the monotonicity of the boundary enables us to prove the smooth-fit property and to investigate the corresponding integral equation. The optimal stopping boundary is characterised as the unique non-positive and continuous solution to a non-linear integral equation.
  
Besides the examination of the optimal strategy, we investigate monotonicity properties of the expected optimal liquidation value with respect to the asset volatility and the prior distribution. Notwithstanding that all-inclusive theorems about parameter dependence appear currently to be beyond reach, we derive some sufficient conditions for monotonicity in the volatility $\sigma$ as well as the prior distribution. In addition, we conduct numerical experiments in the case of the normal prior; some results reinforce standard intuition, others illustrate inherent subtleties. In particular, additional value that an optimal strategy involving filtering brings over an optimal strategy without filtering is calculated, exhibiting an improvement of up to $10 \%$ for some feasible parameter regimes. 

As far as extensions of this work are concerned, solving the optimal liquidation problem for more general diffusions,
with possibly time- and level-dependent coefficients, is more problematic. Such extensions typically lead to the loss of the useful 
one-dimensional Markovian structure present in the classical geometric Brownian motion setting; an optimal decision then depends on the whole price trajectory rather than the current spot price alone. Clearly, a complete treatment of the resulting optimal stopping problem is much more complicated. 

\subsection{Literature review}

Over the last three decades, investment problems with incomplete information about the drift has received much attention from both financial mathematicians and financial economists.
Some of the most distinct works on portfolio optimisation include \cite{DF}, regarded as the first incomplete information problem studied in financial literature, and the general portfolio problems studied in \cite{L95, L98}; see also the recent article \cite{BDL10} proposing a general framework for most of the earlier works as well as containing an excellent survey with references. Hedging in an incomplete market under partial information about a constant drift was addressed in \cite{M} in the case of the Kalman-Bucy filter. In addition, incomplete information models have been investigated in the financial economics literature (see the survey paper \cite{B04} as well as the monograph \cite{Z03}). 

In contrast, there have been surprisingly few attempts to tackle financial optimal stopping problems under incomplete information such as the optimal liquidation problem above, with the existing works focusing mainly on a very restrictive case, namely, the two-point prior. The optimal liquidation of an asset with unknown drift has been studied in \cite{EL11} and of an asset with unknown jump intensity in \cite{L13}. For option valuation problems under incomplete information, see \cite{DMV} and \cite{G}. The financial optimal stopping articles above typically assume a two-point prior distribution; having in mind that the prior represents the beliefs about all the different values the parameter could possibly take, the two-point prior strikes as a simplistic and unrealistic assumption. Overcoming this assumption is one of the main contributions of the present article. It is also worth mentioning that various different formulations of the optimal selling problem in the case of complete information about the parameters have been studied in \cite{TP}, \cite{EHH}, and \cite{HH}.

\section{The model and problem formulation}

We consider a financial market living on a stochastic basis $(\Omega,\F, \mathbb{F}, \P)$, where the filtration $\mathbb{F}= \{\mathcal{F}_{t} \}_{t \geq 0}$ satisfies the usual conditions and the measure $\P$ denotes the physical probability measure. The basis supports a Brownian motion $W$ and a random variable $X$ such that $W$ and $X$ are independent. We assume that the observed price process $S$  evolves according to
\begin{IEEEeqnarray}{rCl} \label{E:S}
\ud S_t = X S_t \ud t +\sigma S_t \ud W_t,
\end{IEEEeqnarray}
where the volatility $\sigma>0$ is a known constant.
We write $\mathbb F^S= \lt\{ \mathcal F^S_t \rt\}_{t \geq 0}$ for the filtration generated by the price process $S$ and augmented by the null sets of $\F$. In this article,  $\mathbb F^S$ corresponds to the only available source of information, i.e.~an agent can only observe the price process $S$, but not the random driver $W$ or the drift $X$. The distribution of $X$, which we denote by $\mu$, represents the subjective beliefs of the individual about the likeliness of the different values the mean return rate $X$ may take. 
 
The optimal selling problem that we are interested in is
\begin{IEEEeqnarray}{rCl} \label{E:OS}
V=\sup_{\tau \in \mathcal{T}_T^{S} } \E[S_\tau], 
\end{IEEEeqnarray}
where $\mathcal{T}_T^{S}$ denotes the set of  $\mathbb F^S$-stopping times that are less or equal to a specified time horizon $T >0$. 

Note that if the support of $\mu$ is contained in $[0, \infty)$, then an optimal strategy is to stop at the terminal time $T$. 
Similarly, if the support of $\mu$ is contained in $(-\infty, 0]$, then an optimal strategy is to stop immediately.
To exclude these trivial cases, we from now on impose the assumption that $\mu((-\infty,0))\not= 0$ and $\mu((0,\infty))\not= 0$.

\begin{remark}
The inclusion of a constant discount rate $r>0$ is straightforward. Indeed, the discounted 
price
$\tilde S_t:=e^{-rt}S_t$ satisfies
\begin{IEEEeqnarray}{rCl}
\ud \tilde S_t = (X-r) \tilde S_t \ud t +\sigma \tilde S_t \ud W_t, 
\end{IEEEeqnarray}
and so the optimal stopping problem 
\[\sup_{\tau \in \mathcal{T}^S_T } \E[e^{-r\tau}S_\tau]\]
reduces to \eqref{E:OS} but with the prior distribution replaced by $\mu(\cdot+r)$.

%
\end{remark}

%
%

\subsection{Equivalent reformulation under a measure change}

Assuming that $\mu$ has a first moment, $\hat X_t := \E[ X \, | \, \F^S_t]$ exists,
and the process
\[ \hat W_t := \frac{1}{\sigma}\int_0^t(X - \hat X_s)  \ud s + W_t ,\] known as the innovation process, is an $\mathbb{F}^{S}$-Brownian motion (see \cite[Proposition 2.30 on p.~33]{BC09}). Writing $\mathbb F^{\hat W}=\{\F^{\hat W}_t\}_{t \geq 0}$ for the completion of the filtration \mbox{$\{ \sigma(\hat W_s : 0\leq s \leq t) \} _{t \geq 0 }$}, we note that $\mathbb F^S=\mathbb F^{\hat W}$ (see the remark on p.~35 in \cite{BC09}).

Defining a change of measure by the random variable
\[ 
\frac{\vd \Q}{\vd \P}  = e^{\sigma \hat{W}_T - \frac{\sigma^2}{2}T},
\]
and writing
\begin{IEEEeqnarray*}{rCl}
S_t &=& S_0e^{Xt+\sigma W_t-\frac{\sigma^2}{2}t}\\
&=& S_0e^{\int_0^t\hat X_s \ud s +\sigma \hat W_t-\frac{\sigma^2}{2}t},
\end{IEEEeqnarray*}
we have
\begin{IEEEeqnarray*}{rCl}
 \E \lt[ S_\tau \rt] 
= \E^{\Q} \lt[ S_0e^{ \int_0^\tau  \hat X_{s} \ud s }  \rt]= S_0 \E^{\Q} \lt[e^{ \int_0^\tau  \hat X_{s} \ud s } \rt],
\end{IEEEeqnarray*}
where $\tau \in \mathcal{T}^{S}_{T}$. Without loss of generality, we assume $S_0=1$ throughout the article; the optimal stopping problem \eqref{E:OS} then becomes
\begin{IEEEeqnarray}{rCl} \label{E:OSN}
V=\sup_{\tau \in \mathcal{T}^{S}_T } \E^{\Q}[e^{ \int_0^\tau  \hat X_{s} \ud s }].
\end{IEEEeqnarray}
We also note that, by Girsanov's theorem, the process $Z_t:= -\sigma t + \hat{W}_t$ is a $\Q$-Brownian 
motion on $[0, T]$.

\section{Analysis of the optimal stopping problem}

\subsection{Projecting onto the observable filtration}

Let us introduce $Y_t := Xt + \sigma W_t$ so that 
$S_t=S_0e^{Y_t-\frac{\sigma^2}{2}t}$. Clearly, the processes $Y$ and $S$ generate the same filtrations. 
The following proposition describes the conditional distribution of
$X$ given observations of the stock price in terms of the current value of the process $Y$. For its proof, see
Proposition 3.16 in \cite{BC09}. 

\begin{proposition}
\label{T:integrals}
Let $q:\R\to\R$ satisfy $\int_\R |q(u)| \mu(\vd u) < \infty$. Then 
\begin{IEEEeqnarray*}{rCl} 
\E\lt[ q(\Dr) \vert \mathcal F^S_t\rt] 
=\E\lt[ q(\Dr) \vert Y_t\rt]
=\frac{\int_\R q(u) e^{\frac{2uY_t -u^2t}{2\sigma^2}} \mu(\vd u)}{\int_\R e^{\frac{2uY_t -u^2t}{2\sigma^2}} \mu(\vd u)}
\end{IEEEeqnarray*} 
for any $t\geq0$.
\end{proposition}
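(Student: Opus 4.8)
The plan is to establish both equalities at once by identifying the likelihood of the observed path as a function of the current value $Y_t$ alone, and then to apply Bayes' formula for conditional expectations under a change of measure (the Kallianpur--Striebel formula). Since $S_t=S_0e^{Y_t-\sigma^2 t/2}$, the processes $S$ and $Y$ generate the same filtration, so $\mathcal F^S_t=\mathcal F^Y_t$ and it suffices to work with $Y$.

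First I would introduce a reference measure $\P_0$ under which $\Dr$ retains its prior law $\mu$ but is \emph{independent} of the observation, with $Y=\sigma W$ a driftless Brownian motion, so that $\Dr\perp\mathcal F^Y_t$ under $\P_0$. Conditionally on $\Dr=u$, the law of $(Y_s)_{0\le s\le t}$ under $\P$ is that of a Brownian motion with drift $u$ and diffusion $\sigma$; by the Cameron--Martin--Girsanov theorem its density on $\sigma(\Dr)\vee\mathcal F^Y_t$ with respect to $\P_0$ is
\[
\Lambda_t(u)=\exp\!\lt(\frac{u}{\sigma^2}Y_t-\frac{u^2 t}{2\sigma^2}\rt)=\exp\!\lt(\frac{2uY_t-u^2t}{2\sigma^2}\rt).
\]
The key structural observation, which drives the whole result, is that $\Lambda_t(u)$ depends on the path only through the terminal value $Y_t$.

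Next I would apply the abstract Bayes formula,
\[
\E\lt[q(\Dr)\,\big|\,\mathcal F^Y_t\rt]=\frac{\E_{\P_0}\lt[q(\Dr)\Lambda_t(\Dr)\,\big|\,\mathcal F^Y_t\rt]}{\E_{\P_0}\lt[\Lambda_t(\Dr)\,\big|\,\mathcal F^Y_t\rt]}.
\]
Because $\Dr$ is independent of $\mathcal F^Y_t$ under $\P_0$ with law $\mu$, each conditional expectation is evaluated by the freezing lemma, integrating out $\Dr$ against $\mu$ while treating $Y_t$ as frozen:
\[
\E_{\P_0}\lt[q(\Dr)\Lambda_t(\Dr)\,\big|\,\mathcal F^Y_t\rt]=\int_\R q(u)\,\exp\!\lt(\frac{2uY_t-u^2t}{2\sigma^2}\rt)\mu(\vd u),
\]
and likewise for the denominator with $q\equiv1$. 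This yields exactly the claimed ratio. Moreover, both numerator and denominator are functions of $Y_t$ alone, which simultaneously gives the reduction $\E[q(\Dr)\mid\mathcal F^S_t]=\E[q(\Dr)\mid Y_t]$.

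The two points requiring care are the construction and justification of the change of measure and the integrability needed to invoke the freezing lemma and Fubini. For the latter, note that for each fixed $t>0$ the exponent is maximised at $u=Y_t/t$, so $\Lambda_t(u)\le \exp(Y_t^2/(2\sigma^2 t))$ is bounded uniformly in $u$; hence $\int_\R|q(u)|\Lambda_t(u)\,\mu(\vd u)<\infty$ follows from the hypothesis $\int_\R|q(u)|\mu(\vd u)<\infty$, and the denominator is strictly positive and finite, while the degenerate case $t=0$ reduces to the prior expectation $\E[q(\Dr)]$. The main obstacle is the rigorous setup of $\P_0$ together with the identification of $\Lambda_t$ as the Radon--Nikodym derivative on $\sigma(\Dr)\vee\mathcal F^Y_t$, i.e.\ verifying that the Girsanov density indeed implements the conditional law of the drifted path; once this is in place, the remaining steps are routine bookkeeping.
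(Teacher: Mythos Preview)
Your proof is correct and follows the standard Kallianpur--Striebel approach; this is precisely the content of Proposition~3.16 in Bain and Crisan \cite{BC09}, which the paper cites in lieu of giving its own argument. The key structural observation---that the likelihood $\Lambda_t(u)$ depends on the observation path only through the terminal value $Y_t$---and your handling of integrability (bounding $\Lambda_t(u)$ uniformly in $u$ for $t>0$) are both accurate.
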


By Proposition~\ref{T:integrals}, the distribution $\mu_{t,y}$ of $\Dr$ at time $t$ conditional on $Y_t=y$ is given by
\begin{IEEEeqnarray}{rCl}
\label{E:mu}
\mu_{t,y}(\vd u):=\frac{e^{\frac{2uy-u^2t}{2\sigma^2}} \mu(\vd u)}{\int_\R e^{\frac{2uy-u^2t}{2\sigma^2}} \mu(\vd u)},
\end{IEEEeqnarray}
and
\begin{IEEEeqnarray}{rCl} \label{E:hX}
\hat X_t = \E[ X \, | \, \F^S_t]=\E\lt[ \Dr \vert  Y_t\rt] =f(t,Y_t)
\end{IEEEeqnarray} 
for any $t>0$, where
\[f(t,y) =\int_\R u \mu_{t,y}(\vd u)=\frac{\int_\R u e^{\frac{2uy -u^2t}{2\sigma^2}} \mu(\vd u)}{\int_\R e^{\frac{2uy -u^2t}{2\sigma^2}} \mu(\vd u)}\]
As a shorthand, we denote by $\E_{t,y}$ the expectation operator under the probability measure $\P_{t,y}(\cdot ):=\P(\cdot\vert Y_t=y)$.

From now onwards, the following integrability condition on $\mu$ is imposed.
\begin{assump}
\label{assump}
The prior distribution $\mu$ satisfies 
\begin{IEEEeqnarray}{rCl}
\label{integrability}
 \int_\R e^{a u^2}\mu(\vd u)<\infty 
\end{IEEEeqnarray}
for some $a>0$.
\end{assump}
This assumption is an insignificant restriction on our optimal liquidation problem, since, given any probability distribution $\mu$, 
the distributions $\mu_{t,y}$ in \eqref{E:mu} satisfy \eqref{integrability} for any $t>0$. 
The main benefit of Assumption~\ref{assump} is that it allows us to extend the definition of $\mu_{t,y}$ in \eqref{E:mu} to $t=0$.
Indeed, suppose that $\mu$ satisfies \eqref{integrability} with $a=\ep/(2\sigma^2)$ for some $\ep>0$ . Defining a probability distribution $\xi$ on $\R$ by
\begin{equation} \label{E:EP}	
	\xi(\vd u) := \frac{e^{\frac{\ep u^2}{2\sigma^2}}\mu(\vd u)}{\int_\R e^{\frac{\ep u^2}{2\sigma^2}}\mu(\vd u)},
\end{equation}
the measure
\begin{IEEEeqnarray*}{rCl} 
\mu_{0,y} (\vd u) = \frac{e^{\frac{uy}{\sigma^2}} \mu(\vd u)}{\int_\R e^{\frac{uy}{\sigma^{2}}} \mu(\vd u) }
\end{IEEEeqnarray*}
coincides with
\begin{IEEEeqnarray*}{rCl} 
\xi_{\ep, y}(\vd u) := \frac{e^{\frac{2uy-u^2 \ep}{2\sigma^2}} \xi(\vd u)}{\int_\R e^{\frac{2uy-u^2 \ep}{2\sigma^2}} \xi(\vd u) }.
\end{IEEEeqnarray*}
Consequently, the distribution $\mu_{0,y}$ can be identified with a conditional distribution at time $0$ given that the prior distribution 
at time $-\ep$ was $\xi$ and the current value of the observation process is $y$.
This gives us a generalisation of the notion of the starting point of the observation process $Y$ to allow \mbox{$Y_0=y \neq 0$},
so we may regard time 0 as an interior point of the time interval.

Next, we establish a bijective correspondence between $S_{t}$ and $\hat X_{t}$. For it, let $I_\mu$ denote the interior of the smallest closed interval containing the support of $\mu$, i.e.~$I_\mu = (\inf(\supp (\mu)), \sup(\supp (\mu)))$.

\begin{lemma}
\label{f-1-1}
For any given $t \geq 0$, the function $f(t,\cdot): \R \to I_\mu$ defined  above is a strictly increasing continuous bijection.
\end{lemma}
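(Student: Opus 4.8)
The plan is to recognise $f(t,\cdot)$ as the mean of an exponentially tilted family, read off strict monotonicity from the positivity of a variance, and obtain surjectivity by computing the two limits at $\pm\infty$.

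First I would fix $t\ge 0$ and introduce the finite measure $\nu_t(\vd u):=e^{-\frac{u^2 t}{2\sigma^2}}\mu(\vd u)$, so that
\[
\mu_{t,y}(\vd u)=\frac{e^{uy/\sigma^2}\,\nu_t(\vd u)}{\int_\R e^{uy/\sigma^2}\,\nu_t(\vd u)},\qquad f(t,y)=\frac{\int_\R u\,e^{uy/\sigma^2}\,\nu_t(\vd u)}{\int_\R e^{uy/\sigma^2}\,\nu_t(\vd u)} .
\]
All these integrals are finite for every $y\in\R$: for $t>0$ the Gaussian factor dominates, while for $t=0$ one has $\nu_0=\mu$ and Assumption~\ref{assump} guarantees $\int_\R e^{\lambda u}\mu(\vd u)<\infty$ for every $\lambda$ (since $\lambda u\le a u^2+\lambda^2/(4a)$), hence all exponential and polynomial moments exist; note also that $\supp\nu_t=\supp\mu$ since the density $e^{-u^2t/(2\sigma^2)}$ is strictly positive. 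Writing $\phi(y):=\int_\R e^{uy/\sigma^2}\nu_t(\vd u)$, this identifies $f(t,y)=\sigma^2(\log\phi)'(y)$, i.e.\ $f(t,\cdot)$ is, up to the rescaling $y\mapsto y/\sigma^2$, the derivative of the cumulant generating function of $\nu_t$.

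Next come continuity and strict monotonicity. Since $\phi$ is the Laplace transform of a finite measure possessing all exponential moments, it is smooth (indeed real-analytic) in $y$, and differentiation under the integral sign --- justified by the moment bounds above via dominated convergence --- is legitimate. A direct computation gives
\[
\frac{\pt f}{\pt y}(t,y)=\sigma^2(\log\phi)''(y)=\frac{1}{\sigma^2}\,\Var_{\mu_{t,y}}(X),
\]
the variance of $X$ under $\mu_{t,y}$. Because $\mu_{t,y}$ and $\mu$ are mutually absolutely continuous with a strictly positive density, $\mu_{t,y}$ is not a point mass --- the standing assumption $\mu((-\infty,0))\neq0$ and $\mu((0,\infty))\neq0$ rules this out --- so this variance is strictly positive and $f(t,\cdot)$ is strictly increasing and continuous.

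It remains to identify the range as $I_\mu$, which I expect to be the main obstacle. For this I would show
\[
\lim_{y\to+\infty}f(t,y)=\sup(\supp\mu),\qquad \lim_{y\to-\infty}f(t,y)=\inf(\supp\mu).
\]
Writing $b:=\sup(\supp\mu)$, one always has $f(t,y)\le b$, and for the matching lower bound I would fix $\eps>0$, use that $\nu_t\big((b-\eps/2,b]\big)>0$ to bound the denominator below by $e^{(b-\eps/2)y/\sigma^2}\nu_t\big((b-\eps/2,b]\big)$, and control the part of the numerator over $\{u\le b-\eps\}$ by $e^{(b-\eps)y/\sigma^2}\int_\R|u|\,\nu_t(\vd u)$, where finiteness of $\int_\R|u|\,\nu_t(\vd u)$ again uses the moment bounds. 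The surplus factor $e^{-\eps y/(2\sigma^2)}$ then forces $\liminf_{y\to\infty}f(t,y)\ge b-\eps$, and letting $\eps\downarrow0$ gives the first limit (the case $b=+\infty$ is handled identically by replacing $b-\eps/2$ with an arbitrary large level); the second limit is symmetric. This is precisely the classical statement that the mean of an exponential family sweeps out the interior of the convex hull of the support as the natural parameter runs over $\R$, and one may alternatively invoke that fact directly. Combining continuity, strict monotonicity, and these two limits, the image of $f(t,\cdot)$ is the open interval $\big(\inf(\supp\mu),\sup(\supp\mu)\big)=I_\mu$, so $f(t,\cdot):\R\to I_\mu$ is a bijection.
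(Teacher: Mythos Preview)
Your proof is correct and follows essentially the same route as the paper: strict monotonicity via $\partial_y f(t,y)=\sigma^{-2}\Var_{\mu_{t,y}}(X)>0$, and surjectivity via an explicit computation of the limits as $y\to\pm\infty$. Your framing through the measure $\nu_t$ and the cumulant generating function lets you treat all $t\ge0$ at once, whereas the paper first reduces to $t=0$ using Assumption~\ref{assump}; the limit argument in the paper uses the substitution $w=u-\theta$ and the minimum of $w\mapsto we^{wy/\sigma^2}$ rather than your split--and--compare estimate, but the two are equivalent in spirit.
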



\begin{proof}
Thanks to Assumption \ref{assump}, it suffices to prove the claim only for $t=0$. 
Differentiation of $f$ under the integral sign yields $\pt_2 f(0,y) = \frac{1}{\sigma^{2}}(\E_{0,y}[X^2]-\E_{0,y}[X]^2)$, which is 
strictly positive and finite for all $y \in \R$. As a result, $y \mapsto f(0, y)$ is strictly increasing. 
For surjectivity, we need that $f(0, y) \tend \sup I_\mu$ as $y \tend \infty$ and $f(0, y) \tend \inf I_\mu$ as $y \tend -\infty$. We only prove the first claim as the second one then follows immediately by symmetry. 

Let $\theta \in I_\mu \cap (0, \infty)$, $y >0$, and consider
\begin{IEEEeqnarray}{rCl}
\int_\R u e^{\frac{uy}{\sigma^2}} \mu( \vd u) -\theta \int_\R e^{\frac{uy}{\sigma^2}} \mu(\vd u) 
&=& \int_\R (u- \theta) e^{\frac{uy}{\sigma^2}} \mu(\vd u) 
= e^{\frac{\theta y}{\sigma^2}} \int_\R w e^{\frac{wy}{\sigma^2}} \mu( \theta + \vd w), \label{E:diff}
\end{IEEEeqnarray}
where $w :=u-\theta$.
As the minimum of $w \mapsto w e^{\frac{wy}{\sigma^2}}$ is attained at $w = -\sigma^2/y$, we have
\[ \int_{(-\infty, 0]} w e^{\frac{wy}{\sigma^2}} \mu(\theta + \vd w)  \geq -\frac{\sigma^2 e^{-1}}{y} \int_{(-\infty, 0]} \mu(\theta + \vd w) \geq - \frac{\sigma^2}{y}. \]
Furthermore, 
\[ \int_{(0, \infty)} w e^{\frac{wy}{\sigma^2}} \mu (\theta + \vd w)  \tend \infty \]
as $y \tend \infty$ by monotone convergence. 
Consequently, from \eqref{E:diff} follows that $f(0, y) \geq \theta $ for all large enough $y$. Since $\theta \in I_\mu $ was arbitrary, we conclude that $f(0,y) \tend \sup I_\mu$ as $y \tend \infty$, which finishes the proof.
\end{proof}
Writing $\mathbb{F}^{\hat X}=\{\F^{\hat X}_{t}\}_{t \geq 0}$ for the completion of the filtration generated by $\hat X$ and writing $\mathcal{T}_{T}^{\hat X}$ for the set of $\mathbb{F}^{\hat X}$-stopping times not exceeding $T$, we formulate the following immediate corollary.
\begin{corollary} 
$\mathbb{F}^{S}=\mathbb{F}^{\hat X}$ and $\mathcal{T}_{T}^{S}= \mathcal{T}_{T}^{\hat X}$.
\end{corollary}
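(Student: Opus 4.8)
The plan is to prove the two filtrations coincide at each time and then read off the statement about stopping times as an immediate consequence. Recall that, as already observed above, $S_t=S_0e^{Y_t-\frac{\sigma^2}{2}t}$ gives a pathwise bijection between $S_t$ and $Y_t$, so that $\mathbb F^S=\mathbb F^Y$. Hence it suffices to show $\mathcal F^{\hat X}_t=\mathcal F^Y_t$ for every $t\geq 0$.

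First I would establish the inclusion $\mathcal F^{\hat X}_t\subseteq \mathcal F^Y_t$. By \eqref{E:hX} we have $\hat X_s=f(s,Y_s)$ for every $s$, and since $y\mapsto f(s,y)$ is continuous and hence Borel measurable, $\hat X_s$ is $\sigma(Y_s)$-measurable, and in particular $\mathcal F^Y_t$-measurable for all $s\leq t$. Passing to the generated (completed) $\sigma$-algebra yields the inclusion. For the reverse inclusion $\mathcal F^Y_t\subseteq \mathcal F^{\hat X}_t$, I would invoke Lemma~\ref{f-1-1}: for each fixed $s$ the map $f(s,\cdot)\colon\R\to I_\mu$ is a strictly increasing continuous bijection, so its inverse $f(s,\cdot)^{-1}\colon I_\mu\to\R$ is again continuous and strictly increasing, and thus Borel measurable. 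Consequently $Y_s=f(s,\cdot)^{-1}(\hat X_s)$ is $\sigma(\hat X_s)$-measurable, hence $\mathcal F^{\hat X}_t$-measurable for $s\leq t$, giving the reverse inclusion. Combining the two inclusions gives $\mathcal F^Y_t=\mathcal F^{\hat X}_t$ for every $t\geq 0$, and therefore $\mathbb F^S=\mathbb F^Y=\mathbb F^{\hat X}$. Note that at $t=0$ the identity $\hat X_0=f(0,Y_0)$ is meaningful precisely because Assumption~\ref{assump} allows the definition of $f$ to be extended to $t=0$, so no separate treatment of the initial time is needed.

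Once the filtrations are known to agree, the claim about admissible stopping times is essentially definitional: a map $\tau$ satisfies $\{\tau\leq t\}\in\mathcal F^S_t$ for all $t$ if and only if $\{\tau\leq t\}\in\mathcal F^{\hat X}_t$ for all $t$, since the relevant $\sigma$-algebras coincide; imposing the bound $\tau\leq T$ in both cases then yields $\mathcal T^S_T=\mathcal T^{\hat X}_T$.

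The only point requiring any care is the measurability of the inverse of $f(s,\cdot)$, but this presents no real obstacle: a strictly monotone continuous bijection automatically has a continuous, hence Borel measurable, inverse, so the entire argument reduces to the bijectivity already furnished by Lemma~\ref{f-1-1}.
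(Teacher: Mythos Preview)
Your argument is correct and is exactly the one the paper has in mind: it states the corollary as an immediate consequence of Lemma~\ref{f-1-1}, and your proof simply spells out the two inclusions using the bijectivity of $f(t,\cdot)$ together with the relation $\hat X_t=f(t,Y_t)$. Your remark that Assumption~\ref{assump} is what makes the identity $\hat X_0=f(0,Y_0)$ meaningful is also apt.
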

\noindent A consequence of this corollary is that the optimal stopping problem \eqref{E:OSN} can be rewritten as
\begin{IEEEeqnarray}{rCl} \label{E:OSNX}
V=\sup_{\tau \in \mathcal{T}^{\hat X}_T } \E^{\Q}[e^{ \int_0^\tau  \hat X_{s} \ud s }].
\end{IEEEeqnarray}
Looking for a more tractable characterisation of $\hat X$, we find an SDE representation of $\hat X$ with respect to the observations filtration $\mathbb{F}^{S}$.
An application of It\^o's formula to $\hat X_t=f(t, Y_t)$ yields
\begin{IEEEeqnarray}{rCl} \label{E:SDEB}
\ud \hat X_t =\sigma \pt_2 f(t, Y_t) \ud \hat{W}_t.
\end{IEEEeqnarray}
Introducing the notation $f_t := f(t, \cdot)$, we define
\[ \psi(t, x) := \sigma\pt_2 f(t, f_t^{-1}(x)) \]
and, from \eqref{E:SDEB} above, obtain a stochastic differential equation
\begin{IEEEeqnarray*}{rCl}
\ud \hat X_{t} &=& \psi( t, \hat X_t) \ud \hat{W}_t
\end{IEEEeqnarray*}
for the conditional mean $\hat X_{t}$. 
Rewriting of the equation in terms of the $\mathbb Q$-Brownian motion $Z_t= -\sigma t + \hat{W}_t$ results in
\begin{IEEEeqnarray}{rCl}
\label{X}
\vd \hat X_t &=& \sigma \psi(t, \hat X_t) \ud t + \psi(t, \hat X_t) \ud Z_t.
\end{IEEEeqnarray}
The dispersion $\psi$ can be expressed more explicitly (by differentiating $f$ under the integral sign) as
\[ \psi(t,x) = \frac{1}{\sigma} \lt( \E_{t,y_{x}(t)}[X^{2}] - \E_{t, y_{x}(t)}[X]^{2} \rt) = \frac{1}{\sigma} \Var_{t,y_{x}(t)}(X), \]
where the notation $y_x(t):=f_t^{-1}(x)$ is used (note that $y_x(t)$ is the unique value of the observation process $Y_t$
that yields $\hat X_t=x$).

\begin{example}{\bf (The two-point prior)} Suppose $\mu=\pi \delta_h + (1-\pi) \delta_l$, where $\delta_l, \delta_h$ denote {the Dirac measures} at $l, h\in \R$ respectively. Then $ \psi(t,x) =\frac{1}{\sigma} (h-x)(x-l)$. 
\end{example}\label{Ex:B}

\begin{example}{\bf (The normal prior)} Suppose $\mu$ is the normal distribution with mean $m$ and variance $\gm^2$. Then the conditional distribution $\P(\cdot\vert Y_t=y)=\mu_{t,y}$ is also normal but with mean $\frac{\sigma^2 m +\gamma^2y}{\sigma^2+t\gamma^2}$
and variance $\frac{\sigma^2\gamma^2}{\sigma^2+t\gamma^2}$. Consequently, $\psi(t,x) = \frac{\sigma\gamma^2}{\sigma^2+t\gamma^2}$. \label{Ex:N}
\end{example}

\subsection{Dispersion of the conditional mean}

The following inequality will be the key to understanding the dispersion function $\psi$.
\begin{proposition} \label{E:HighDimIneq}
Let $X$ be a random variable with $\E[X^4]<\infty$. Then 
\[ 
\E [X^4]\E [X^2] + 2\E[X^3]\E[X^2]\E[X] - \E[X^4] \E [X]^2 - \E[X^3]^2 - \E[X^2]^3 \geq 0
\] 
with the equality if and only if $X$ has a one-point or a two-point distribution.

\end{proposition}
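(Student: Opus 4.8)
The plan is to recognise the left-hand side as a Hankel (moment) determinant and to exploit positive semidefiniteness of the associated moment matrix. Write $m_k := \E[X^k]$ for $0 \le k \le 4$, all finite since $\E[X^4] < \infty$, so that $m_0 = 1$, and set
\[
M := \begin{pmatrix} m_0 & m_1 & m_2 \\ m_1 & m_2 & m_3 \\ m_2 & m_3 & m_4 \end{pmatrix}.
\]
A direct cofactor expansion along the first row gives
\[
\det M = m_2 m_4 - m_3^2 - m_1^2 m_4 + 2 m_1 m_2 m_3 - m_2^3,
\]
which is exactly the expression in the statement. Hence it suffices to show $\det M \ge 0$, with equality precisely when $X$ is supported on at most two points.

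First I would check that $M$ is positive semidefinite. For any real $a, b, c$, expanding the quadratic form and matching the powers $m_{i+j}$ against the square of $a + bX + cX^2$ yields
\[
(a, b, c)\, M\, (a, b, c)^{\top} = \E\big[(a + bX + cX^2)^2\big] \ge 0.
\]
Thus $M$ is positive semidefinite, and in particular $\det M \ge 0$, which establishes the inequality.

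For the equality case, note that $\det M = 0$ if and only if $M$ is singular, i.e.\ if and only if it has a nonzero kernel vector $(a, b, c)$. By the displayed identity this is equivalent to $\E\big[(a + bX + cX^2)^2\big] = 0$, that is, $p(X) = 0$ almost surely for the nonzero real polynomial $p(x) = cx^2 + bx + a$ of degree at most two. The support of $X$ then lies in the zero set of $p$, which has at most two points. Conversely, if $X$ assumes at most two distinct values $x_1, x_2$, then $(x - x_1)(x - x_2)$ (or $(x - x_1)$ in the one-point case) vanishes on the support of $X$, so the corresponding coefficient vector lies in the kernel of $M$ and forces $\det M = 0$. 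Together these two directions give the claimed characterisation.

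The positivity half is immediate once the Hankel structure is spotted, so the step I would flag as the crux is the equality analysis, and the care there is entirely in the degenerate subcases of the kernel polynomial: when $c = 0$ one must verify that the resulting affine relation still confines $X$ to at most two (indeed one) points, and one must rule out the vacuous possibility $b = c = 0$ with $a \ne 0$, which cannot occur since it would force the constant $a$ to vanish. No other genuine obstacle arises.
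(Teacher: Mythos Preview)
Your proof is correct. The determinant computation is right, the positive semidefiniteness of the Hankel moment matrix follows exactly as you say from $(a,b,c)M(a,b,c)^\top = \E[(a+bX+cX^2)^2]\ge 0$, and the equality analysis is complete (for a PSD matrix, $\det M=0$ is equivalent to the existence of a nonzero $v$ with $v^\top M v=0$, which is what you use).

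However, the paper's main proof is genuinely different: it takes three i.i.d.\ copies $X,Y,Z$ and checks by direct expansion that
\[
\E\bigl[(X-Y)^2(Y-Z)^2(Z-X)^2\bigr] \;=\; 6\bigl(m_2m_4 + 2m_1m_2m_3 - m_1^2m_4 - m_3^2 - m_2^3\bigr),
\]
so nonnegativity and the equality case (at most two atoms, since otherwise three independent draws are pairwise distinct with positive probability) drop out immediately. Your route is the classical Hamburger moment-problem argument: $M$ is the Gram matrix of $1,X,X^2$ in $L^2(\P)$, whose determinant is the squared volume and vanishes exactly when these functions are linearly dependent. This is arguably the more systematic proof --- it generalises at once to higher-order moment determinants --- and it is very likely the ``Pythagorean theorem in $L^2$'' alternative that the paper alludes to in the remark following its proof. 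The paper's i.i.d.\ identity, by contrast, is a self-contained one-line trick tailored to this particular combination of moments.
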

\begin{proof}
Let $X, Y, Z$ be independent and identically distributed random variables with $\E[X^4] < \infty$. Observe that
\begin{IEEEeqnarray*}{rCl}
\IEEEeqnarraymulticol{3}{l}{
\E[(X-Y)^2(Y-Z)^2(Z-X)^2]} \\
&=& \E[X^4(Y^2+Z^2)+Y^4(Z^4+X^4) +Z^4(X^4+Y^4)] \\
&&-2 \E[X^4YZ +Y^4ZX +Z^4XY] \\
&&+2 \E[X^3(Y^2Z+Z^2Y) + Y^3(Z^2X+X^2Z) +Z^3(X^2Y + Y^2X)] \\
&& -2 \E[X^3Z^3+Y^3X^3 +Z^3Y^3]-6\E[X^2Y^2Z^2] \\
&=& 6(\E[X^4]\E [X^2]  - \E[X^4] \E [X]^2 + 2\E[X^3]\E[X^2]\E[X]- \E[X^3]^2 - \E[X^2]^3),
\end{IEEEeqnarray*}
where the last equality holds because $X,Y, Z$ are i.i.d. It is clear that
\[ \E[(X-Y)^2(Y-Z)^2(Z-X)^2] \geq 0 \]
with the equality if and only if $X$ has a one-point or a two-point distribution. This finishes the proof of the claim.
\end{proof}
\begin{remark}We are grateful to Johan Tysk for providing an alternative proof of the above proposition based on the Pythagorean theorem in $L^2$ spaces.
\end{remark}

\begin{proposition}[Properties of the dispersion function $\psi$] \item \label{T:psigr}
\begin{enumerate} 
\item
For any $x \in I_\mu$, the function $t \mapsto \psi(t, x)$ is non-increasing. 
It is strictly decreasing unless $\mu$ is a two-point distribution, in which case $t \mapsto \psi(t,x)$ is a constant. 
\item
$\pt_2^2 \psi \geq -\frac{2}{\sigma}$ with a strict inequality unless $\mu$ is a two-point distribution, in which case we have equality. 
\item If $\mu$ is compactly supported, then $\psi$ is bounded.
\end{enumerate} 
\end{proposition}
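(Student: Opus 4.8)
The plan is to reduce all three claims to a single nonnegativity statement about the central moments of the posterior law $\mu_{t,y}$, to which Proposition~\ref{E:HighDimIneq} applies directly. For the posterior $\mu_{t,y}$ write $m:=\E_{t,y}[X]$ and let $c_k:=\E_{t,y}[(X-m)^k]$ be its $k$-th central moment, so that $c_2=\Var_{t,y}(X)$ and $\sigma\psi(t,x)=c_2$ evaluated at $y=y_x(t)$. Applying Proposition~\ref{E:HighDimIneq} to the centered variable $X-m$, whose first moment vanishes, collapses the five-term expression there to
\[ Q:=c_2 c_4 - c_3^2 - c_2^3 \geq 0, \]
with equality if and only if $\mu_{t,y}$ is supported on one or two points. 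Since $\mu$ charges both $(-\infty,0)$ and $(0,\infty)$ and the tilting factor in \eqref{E:mu} is strictly positive, $\mu_{t,y}$ is mutually absolutely continuous with $\mu$; hence it is non-degenerate and shares the support of $\mu$, so $Q=0$ identically precisely when $\mu$ is a two-point distribution, and $Q>0$ otherwise. This already dispatches part (3): if $\supp(\mu)$ is compact, then so is $\supp(\mu_{t,y})$, and the variance of a law supported in a fixed compact interval is bounded by a constant independent of $(t,y)$, so $\psi=c_2/\sigma$ is bounded.

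For parts (1) and (2), the strategy is to compute the $t$- and $x$-derivatives of $\psi$ by differentiating the central moments of the exponential family $\mu_{t,y}$ and then transporting to the $(t,x)$-variables through $y=y_x(t)=f_t^{-1}(x)$. Writing the posterior density as proportional to $e^{(2uy-u^2t)/(2\sigma^2)}$, differentiation under the integral sign, justified for $t>0$ by the Gaussian-type integrability in Assumption~\ref{assump} and extended to $t=0$ via the reparametrisation \eqref{E:EP}, yields the tilting identities $\pt_y m=c_2/\sigma^2$, $\pt_y c_2=c_3/\sigma^2$, $\pt_y c_3=(c_4-3c_2^2)/\sigma^2$, together with $\pt_t m=-(c_3+2m c_2)/(2\sigma^2)$ and $\pt_t c_2=-(c_4-c_2^2+2m c_3)/(2\sigma^2)$.

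For the spatial convexity (2), since $dy/dx=1/\pt_2 f=\sigma^2/c_2$, the chain rule gives $\pt_2\psi=c_3/(\sigma c_2)$ and then, after one more differentiation,
\[ \pt_2^2\psi = \frac{1}{\sigma}\,\frac{c_2 c_4 - c_3^2 - c_2^3}{c_2^3} - \frac{2}{\sigma} = \frac{1}{\sigma}\,\frac{Q}{c_2^3} - \frac{2}{\sigma}, \]
so that $\pt_2^2\psi\geq -2/\sigma$, with equality iff $Q=0$ iff $\mu$ is two-point. For the time-monotonicity (1), $x$ is held fixed, so $m=x$ is constant along the curve and implicit differentiation gives $y_x'(t)=-\pt_t m/\pt_y m$; substituting into $\frac{d}{dt}\psi=\frac{1}{\sigma}(\pt_t c_2+\pt_y c_2\, y_x'(t))$ and simplifying, the $m$-dependent terms cancel and one is left with
\[ \frac{d}{dt}\psi(t,x) = -\frac{c_2 c_4 - c_3^2 - c_2^3}{2\sigma^3 c_2} = -\frac{Q}{2\sigma^3 c_2} \leq 0, \]
which is strict unless $\mu$ is two-point, where it vanishes identically; the bound on $(0,\infty)$ extends to $t=0$ by continuity.

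The routine but delicate part will be the bookkeeping in the two moment computations, in particular keeping track of the implicit $y$-dependence of $m$ inside $c_k=\E_{t,y}[(X-m)^k]$ when differentiating, and carrying out the cancellation of the $m$-dependent terms in (1). Conceptually the only genuine input is Proposition~\ref{E:HighDimIneq}; the pleasant feature is that the very same nonnegative quantity $Q$ controls both the sign of $\pt_t\psi$ and the lower bound on $\pt_2^2\psi$, so once the derivative formulas are in hand all three assertions, including their equality cases, follow at once.
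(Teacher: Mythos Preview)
Your proof is correct and follows essentially the same route as the paper: compute $\partial_t\psi$ and $\partial_2^2\psi$ by differentiating the posterior moments, use implicit differentiation through $y=y_x(t)$, and reduce both signs to Proposition~\ref{E:HighDimIneq}. The one difference is that you first center the variable, so that the five-term inequality collapses to the three-term quantity $Q=c_2c_4-c_3^2-c_2^3$; this is a genuine simplification over the paper's raw-moment bookkeeping and has the pleasant side effect of making it transparent that the \emph{same} nonnegative quantity governs both $\partial_t\psi=-Q/(2\sigma^3 c_2)$ and $\partial_2^2\psi+2/\sigma=Q/(\sigma c_2^3)$, a structural link that is present but hidden in the paper's longer formulas. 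Your treatment of part~(3) via the trivial bound on the variance of a compactly supported law is also slightly more direct than the paper's tail-integral identity.
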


\begin{proof}
\begin{enumerate}
\item
Recall the notation $y_x(t)=f_t^{-1}(x)$, and consider
\begin{IEEEeqnarray*}{rCl}
\pt_{1} \psi(t,x) &=& \frac{\pt}{\pt t} \lt( \frac{1}{\sigma} \lt( \E_{t, y_{x}(t)}[X^{2}] - \E_{t, y_{x}(t)}[X]^{2} \rt) \rt) \\
&=& \frac{\pt}{\pt t} \lt( \frac{1}{\sigma} \lt( \E_{t, y_{x}(t)}[X^{2}] -x^{2} \rt) \rt) \\
&=& \frac{1}{\sigma} \Bigg( \E_{t, y_{x}(t)} \lt[X^{2} \lt( \frac{y_{x}'(t)}{\sigma^{2}}X- \frac{1}{2 \sigma^{2}} X^{2} \rt) \rt] \\
&&- \E_{t, y_{x}(t)} \lt[X^{2} \rt] \E_{t, y_{x}(t)} \lt[ \frac{y_{x}'(t)}{\sigma^{2}}X- \frac{1}{2 \sigma^{2}} X^{2} \rt] \Bigg) \\
&=& \frac{1}{\sigma^{3}} \Big( y_{x}'(t) \lt( \E_{t, y_{x}(t)}[X^{3}] - \E_{t, y_{x}(t)}[X^{2}] \E_{t, y_{x}(t)}[X] \rt) \\
&& -\frac{1}{2} \lt( \E_{t, y_{x}(t)}[X^{4}] - \E_{t, y_{x}(t)}[X^{2}]^{2}\rt) \Big)
\end{IEEEeqnarray*} 
Implicit differentiation using the identity $x = f(t, y_{x}(t))$ gives that
\begin{IEEEeqnarray*}{rCl}
y_{x}'(t) &=& \frac{1}{2} \frac{\E_{t, y_{x}(t)}[X^{3}] - \E_{t, y_{x}(t)}[X^{2}]\E_{t, y_{x}(t)}[X]}{\Var_{t, y_{x}(t)}(X)},
\end{IEEEeqnarray*}
which substituted into the last expression above yields 
\begin{IEEEeqnarray*}{rCl}
\pt_{1} \psi(t,x) &=& \frac{ \lt( \E_{t, y_{x}(t)}[X^{3}] - \E_{t, y_{x}(t)}[X^{2}]\E_{t, y_{x}(t)}[X] \rt)^{2} - \Var_{t, y_{x}(t)} (X^{2}) \Var_{t, y_{x}(t)}(X)}{2 \sigma^{3} \Var_{t, y_{x}(t)}(X)} \\\
&=& \frac{-1} {2 \sigma^{3} \Var_{t, y_{x}(t)}(X)} \bigg( \E_{t, y_{x}(t)} [X^4]\E_{t, y_{x}(t)} [X^2] \\
&&+ 2\E_{t, y_{x}(t)}[X^3]\E_{t, y_{x}(t)}[X^2]\E_{t, y_{x}(t)}[X] \\
&&- \E_{t, y_{x}(t)}[X^4] \E_{t, y_{x}(t)} [X]^2 - \E_{t, y_{x}(t)}[X^3]^2 - \E_{t, y_{x}(t)}[X^2]^3 \bigg) .
\end{IEEEeqnarray*}

Now, the claim follows from Proposition \ref{E:HighDimIneq} applied to the term between the parentheses.
\item
By the chain rule applied to the definition of $\psi$, we have 
\[\pt_2 \psi(t, x) = \sigma \pt_2^2 f (t, y_x(t)) \pt_2 y(t, x), \]
where $y(t,x):=y_x(t)$. Here
\begin{IEEEeqnarray*}{rCl}
\pt_{2}^{2} f (t, y) &=& \frac{1}{\sigma^{2}} \frac{\pt }{\pt y} \lt( \E_{t,y}[X^{2}]- \E_{t,y}[X]^{2} \rt) \\
&=& \frac{1}{\sigma^{4}} \lt( \E_{t,y}[X^{3}] -3 \E_{t,y}[X^{2}]\E_{t,y}[X] + 2 \E_{t,y}[X]^{3} \rt)
\end{IEEEeqnarray*}
by straightforward differentiation under the integral sign, and
\[ \pt_2 y(t, x) = \frac{1}{\pt_2 f(t, y(t, x))} = \frac{\sigma^{2}}{\Var_{t, y_x(t)} (X)} .\]
by implicit differentiation. Hence
\begin{IEEEeqnarray*}{rCl}
\pt_{2} \psi(t,x) &=& \frac{1}{\sigma} \lt( \frac{\E_{t, y_{x}(t)}[X^{3}] - \E_{t, y_{x}(t)}[X^{2}]\E_{t, y_{x}(t)}[X]}{\Var_{t, y_{x}(t)}(X)} - 2 x \rt). 
\end{IEEEeqnarray*}
It remains to establish the inequality
\begin{IEEEeqnarray*}{rCl}
\pt_2^2 \psi (t,x) +\frac{2}{\sigma} 
&=& \frac{1}{\sigma}\frac{\pt}{\pt y} \lt(\frac{\E_{t, y}{[X^3]} - \E_{t, y}[X^2] \E_{t, y}[X]}{\Var_{t, y}(X)} \rt)\Bigg|_{y=y_x(t)}  \pt_2 y(t, x) \geq 0. 
\end{IEEEeqnarray*}

As $\pt_2 y > 0$, equivalently, it suffices to prove the non-negativity of 
\begin{IEEEeqnarray*}{rCl}
q(t,y) &:=& \Var_{t,y}(X)^{2} \frac{\pt}{\pt y} \lt(\frac{\E_{t, y}[X^3] - \E_{t, y}[X^2] \E_{t, y}[X]}{\Var_{t,y}(X)}\rt)
\\
&=&  \frac{\pt}{\pt y} \lt( \E_{t,y}[ X^3] \rt) \Var_{t,y} (X) - \E_{t,y}[X^3] \frac{\pt}{\pt y} \lt( \Var_{t,y} X \rt) \\
&&- \frac{\pt}{\pt y} \lt( \E_{t,y} [X^2] \E_{t,y}[X] \rt) \Var_{t,y} (X) + \E_{t,y}[X^2]\E_{t,y}[X] \frac{\pt}{\pt y} \Var_{t,y} (X).
\end{IEEEeqnarray*}
Further differentiation yields that
\begin{IEEEeqnarray*}{rCl}
q(t, y) &=&\frac{1}{\sigma^{2}}\bigg(\E_{t,y}[ X^4]\E_{t,y} [X^2] +2 \E_{t,y} [X^3] \E_{t,y} [X^2] \E_{t,y} [X] \\
&& - \E_{t,y} [X^4] \E_{t,y}[ X]^2 - \E_{t,y}[ X^3]^2 - \E_{t,y}[ X^2]^3\bigg).
\end{IEEEeqnarray*}
Thus, by Proposition \ref{E:HighDimIneq}, $q \geq 0$; moreover, $q>0$ for all priors $\mu$ except the two-point distribution in which case $q=0$.
\item
The well-known identity 
\[ \E_{t, y_{x}(t)}[|X|^2]=2 \int_{[0, \infty)} u \P_{t, y_{x}(t)}(| X |> u) \ud u \]
ensures that $\psi$ is bounded for compactly supported distributions. 
\end{enumerate}
\end{proof}
\begin{remark} \item
\begin{enumerate}
\item
It is possible to come up with a contrived example of a prior distribution for which the dispersion $\psi$ is unbounded. For this, think of a discrete probability measure supported on an infinite number of points  $x_1 < x_2 < \ldots < x_n < \ldots$ such that $x_{n}-x_{n-1} \tend \infty$ as $n \tend \infty$. Using the notation $\bar x_n := (x_{n-1}+ x_n)/2$ for the mean between neighbouring points, the value $\psi(t,\bar x_{n})=Var_{t,\bar x_n}(X)/\sigma \tend \infty$ as $n\tend \infty$ by comparison with a two-point distribution concentrated at the points $x_{n-1}$ and $x_{n}$.
\item
Let us stress that compact support of the prior is by no means a necessary condition for the boundedness of $\psi$. For instance, we know that $\psi$ is bounded in the case of a normal prior as seen in the example on page \pageref{Ex:N}. Though a rigorous investigation into precise technical conditions on the prior for the boundedness of $\psi$ appears to be involved enough to be omitted in this article, we conjecture, based on numerical investigations, that $\psi$ is bounded for any prior admitting a density that monotonically approaches zero outside a large enough finite-length interval around the origin. 
\end{enumerate}
\end{remark}
As the boundedness of $\psi$ appears to be satisfied by any conceivable prior of interest in practical applications, we make it an assumption in the rest of the article.
\begin{assump}
\label{assumpbdd}
The prior distribution $\mu$ is such that $\Var_{0,y}(X) < \infty $ for all $y \in \R$.
\end{assump}



\subsection{The Markovian value function and the optimal strategy}

Using the dynamics \eqref{X} of $\hat X$, we are able to embed the optimal stopping problem \eqref{E:OSNX} into a Markovian framework.
To do that, define
\begin{IEEEeqnarray}{rCl} \label{E:VM}
v(t,x) 
&=& \sup_{\tau \in \mathcal{T}_{T-t}} \E^{\Q} \lt[ e^{ \int_0^\tau  \hat X^{t,x}_{t+s} \ud s } \rt] \label{E:OSP}, \quad (t,x) \in [0, T]\times I_{\mu},
\end{IEEEeqnarray}
where the process $\hat X= \hat X^{t,x}$ is given by
\begin{IEEEeqnarray*}{rCl} \label{E:Bhm}
\lt\{ \begin{array}{ll}
\vd \hat X_{t+s} = \sigma \psi(t+s, \hat X_{t+s}) \ud s + \psi(t+s, \hat X_{t+s}) \ud Z_{t+s} & \quad (s >0), \\
\hat X_t = x,\end{array} \rt.
\end{IEEEeqnarray*} 
and $\mathcal T_{T-t}$ denotes the set of stopping times less or equal to $T-t$ with respect to the completed filtration of $\{ \hat X^{t,x}_{t+s}\}_{s \geq 0}$.


Let us define the sets
\[\mathcal C =\{(t,x)\in[0,T] \times I_\mu :v(t,x)> 1 \}\]
and
\[\mathcal D=\{(t,x)\in[0,T] \times I_\mu :v(t, x)=1\},\]
which we will soon show  to correspond respectively to continuation and stopping sets of an optimal strategy. 
Note that 
\begin{IEEEeqnarray}{rCl}
\label{v1}
v\geq 1
\end{IEEEeqnarray}
everywhere and that $v(T,x)=1$, so $\mathcal{C} \cup \mathcal D = [0,T]\times I_\mu$ and the random time
\begin{IEEEeqnarray}{rCl}
\label{tauD}
\tau_\mathcal{D} := \inf \{ s \geq 0: (t+s,\hat X^{t,x}_{t+s})\in\mathcal D\}
\end{IEEEeqnarray}
satisfies $\tau_\mathcal{D}\leq T-t$.

\begin{proposition}[Optimal stopping time] \label{T:OST}
The value function $v$ is finite, and
the time $\tau_\mathcal{D}$ defined in \eqref{tauD} is an optimal stopping time. 
\end{proposition}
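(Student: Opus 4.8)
The plan is to read the Markovian problem \eqref{E:OSP} as an optimal stopping problem for the time-inhomogeneous diffusion $\hat X$ solving \eqref{X} — whose drift $\sigma\psi$ and dispersion $\psi$ are bounded by Assumption~\ref{assumpbdd}, hence well-posed with the (strong) Markov property — carrying the continuous, strictly positive multiplicative reward $R_s:=\exp\big(\int_0^s\hat X^{t,x}_{t+u}\,\ud u\big)$, and then to invoke the general theory of the Snell envelope. Two ingredients must be supplied: the integrability that makes the problem well-posed, and the identification of the theoretically optimal stopping time with $\tau_\mathcal D$. Since stopping at $\tau=0$ yields $R_0=1$, we have $v\ge 1$ with immediate stopping corresponding to $v=1$, so $\mathcal D=\{v=1\}$ is exactly the set where the reward meets its envelope.

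First I would establish finiteness. Undoing the measure change that led to \eqref{E:OSN} rewrites $v(t,x)$ as $\sup_{\tau}\E_{t,y_x(t)}[S_{t+\tau}/S_t]$, so that, with $B$ a standard Brownian motion,
\[
v(t,x)=\sup_{\tau\in\mathcal T_{T-t}}\E_{t,y_x(t)}\Big[\tfrac{S_{t+\tau}}{S_t}\Big]
\le \E_{t,y_x(t)}\Big[\sup_{0\le s\le T-t}\tfrac{S_{t+s}}{S_t}\Big]
\le \int_\R \E\big[\sup_{0\le s\le T}e^{us+\sigma B_s-\frac{\sigma^2}{2}s}\big]\,\mu_{t,y_x(t)}(\vd u).
\]
Bounding $\sup_{s\le T}e^{us-\frac{\sigma^2}{2}s}\le e^{|u|T}$ and using the reflection principle to get $\E[\sup_{s\le T}e^{\sigma B_s}]=:C(\sigma,T)<\infty$, the inner expectation is at most $C(\sigma,T)e^{|u|T}$. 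Because the posterior $\mu_{t,y_x(t)}$ satisfies \eqref{integrability} (automatically for $t>0$ thanks to its Gaussian factor, and for $t=0$ by Assumption~\ref{assump}), the $e^{au^2}$ weight dominates the linear-exponential growth $e^{|u|T}$ and the integral converges; hence $v(t,x)<\infty$. The same estimate applied to $\sup_{s}R_s$ gives $\E^{\Q}[\sup_{0\le s\le T-t}R_s]<\infty$, so the family $\{R_\tau:\tau\le T-t\}$ is of \emph{class D}.

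Next I would assemble the Snell envelope $V_s:=\operatorname{ess\,sup}_{s\le\tau\le T-t}\E^{\Q}[R_\tau\mid\F_s]$. With $R$ continuous and of class D, $V$ is the smallest right-continuous supermartingale dominating $R$, and by the general theory of optimal stopping for such reward processes the time $\tau^*:=\inf\{s:V_s=R_s\}$ is optimal, i.e. $\E^{\Q}[R_{\tau^*}]=V_0=v(t,x)$. To identify $\tau^*$ with $\tau_\mathcal D$, I would factor out the $\F_s$-measurable quantity $\exp(\int_0^s\hat X\,\ud u)$ and use the Markov property of $\hat X$ to see that the remaining essential supremum depends on $\F_s$ only through $\hat X_{t+s}$ and the residual horizon, giving the multiplicative representation
\[
V_s=\exp\Big(\int_0^s \hat X^{t,x}_{t+u}\,\ud u\Big)\,v\big(t+s,\hat X^{t,x}_{t+s}\big).
\]
Consequently $\{V_s=R_s\}=\{v(t+s,\hat X_{t+s})=1\}=\{(t+s,\hat X_{t+s})\in\mathcal D\}$ path by path, whence $\tau^*=\tau_\mathcal D$; this is $\le T-t$ since $v(T,\cdot)=1$.

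The main obstacle I anticipate is the rigorous verification of the regularity feeding the Snell-envelope theorem: the right-continuity of $V$ and, above all, the multiplicative Markov representation displayed above, which requires $v$ to be a genuine measurable function of the current state together with the time-inhomogeneous strong Markov property of \eqref{X}. These rest on well-posedness and measurable dependence of the SDE solution on its initial data, available from the boundedness and continuity of $\psi$ under Assumptions~\ref{assump} and~\ref{assumpbdd}. The subtle point is that the rate $\hat X$ is sign-indefinite, so $R$ is not a bona fide discount factor and can grow; the class-D property therefore cannot be taken for granted and must be driven throughout by the exponential-moment Assumption~\ref{assump}, exactly as in the finiteness estimate above.
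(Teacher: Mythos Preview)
Your overall strategy coincides with the paper's: both reduce the result to the class~D condition $\E^\Q[\sup_{0\le s\le T-t}R_s]<\infty$ and then invoke general Markovian optimal stopping theory --- the paper cites Theorem~D.12 in \cite{KS2} directly, while you unpack that theorem into the Snell envelope plus the multiplicative identification $V_s=R_s\,v(t+s,\hat X_{t+s})$; this is the same mechanism.

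Where you genuinely diverge is in the integrability bound. The paper works entirely under $\Q$: it bounds $\sup_s\hat X^{0,x}_s$ via Dambis--Dubins--Schwarz applied to the stochastic integral in \eqref{X}, using the boundedness of $\psi$ from Assumption~\ref{assumpbdd}, so that $\sup_s\hat X_s\le x+\sigma mT+\sup_{t\le m^2T}B_t$ for a $\Q$-Brownian motion $B$ and a bound $m$ on $\psi$; the reflection principle then gives $\E^\Q[e^{T\sup_s\hat X_s}]<\infty$. You instead undo the measure change, write $v(t,x)$ as a $\P$-expectation of $S$, and control $\E^\P_{t,y_x(t)}[\sup_s S_{t+s}/S_t]$ via the exponential-moment Assumption~\ref{assump} on the posterior. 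Your route is appealing because it ties the finiteness of $v$ directly to the prior's tails rather than to the dispersion bound.

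There is, however, one gap to close. Your $\P$-estimate yields $\E^\P_{t,y_x(t)}[\sup_s S_{t+s}/S_t]<\infty$, but the Snell-envelope argument needs $\E^\Q[\sup_s R_s]<\infty$. The identity $\E^\Q[R_\tau]=\E^\P[S_{t+\tau}/S_t]$ holds for each fixed stopping time $\tau$ (that is how the measure change was set up), but it does \emph{not} pass to the pathwise supremum: $R_s$ and $S_{t+s}/S_t$ differ by the running factor $e^{\sigma\hat W_s-\sigma^2 s/2}$, whereas $\vd\Q/\vd\P$ involves only its terminal value. So ``the same estimate'' does not literally transfer. You can repair this either by a H\"older argument (your $\P$-bound upgrades to all moments since $e^{au^2}$ dominates $e^{p|u|T}$ for every $p$, and the Radon--Nikodym density has all $\P$-moments), or --- more directly, and this is the paper's choice --- by bounding $\sup_s\hat X_s$ under $\Q$ from the SDE \eqref{X} and the boundedness of $\psi$.
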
 

\begin{proof}
Without loss of generality, assume that $t=0$, and let $x\in\R$.
By Theorem D.12 in \cite{KS2}, to prove the claims it suffices to show that
\[
\E^{\Q} \bigg[\sup_{0\leq t \leq T} \exp \bigg({{\int_0^t  \hat X^{0,x}_s \ud s}}\bigg) \bigg] < \infty.
\]
By the Dambis-Dubins-Schwartz theorem, there exists (possibly on a larger probability space) a Brownian motion $B$ such that
\[\int_0^t \psi(s, \hat X^{0,x}_{s}) \ud Z_{s}=B_{\int_0^t\psi(t, \hat X^{0,x}_s)^2\ud s}\,.\]
If $m >0$ is a constant dominating $\psi$, then
\begin{IEEEeqnarray*}{rCl}
\sup_{0\leq t \leq T} \exp \bigg({{\int_0^t  \hat X^{0,x}_s \ud s}}\bigg) &\leq& \exp \bigg(  T \sup_{0 \leq t \leq T} \hat X^{0,x}_t \bigg) \\
&\leq& \exp \bigg(  T \bigg( x +\sigma m T + \sup_{0\leq t \leq T} B_{\int_0^t \psi(s, \hat X^{0,x}_s)^2 \ud s} \bigg)  \bigg) \\
&\leq& \exp \bigg(  T \bigg( x +\sigma m T + \sup_{0\leq t \leq m^2T} B_{t} \bigg)  \bigg). 
\end{IEEEeqnarray*}
Thus
\begin{IEEEeqnarray*}{rCl}
\E^{\Q} \bigg[\sup_{0\leq t \leq T} \exp \bigg({{\int_0^T  \hat X^{0,x}_s \ud s}}\bigg) \bigg] &\leq& \E^\Q \bigg[ \exp \big(  T \big( x +\sigma m T + \sup_{0\leq t \leq m^2T} B_{t} \big)  \big) \bigg] \\
&=& \exp \lt(  T \lt( x +\sigma m T \rt) \rt) \E^\Q \lt[ \exp \lt( T  |B_{m^2T}|  \rt) \rt] \\
&<& \infty, 
\end{IEEEeqnarray*}
where the equality comes from the reflection principle. 
\end{proof}

Since $\psi$ is continuously differentiable, it is Lipschitz continuous on any compact subset of $[0,T]\times I_{\mu}$. 
To avoid additional technical complications, from now on we impose a slightly stronger assumption of Lipschitz continuity on the whole of $[0,T]\times I_{\mu}$

\begin{assump}
\label{assumpLip}
The function $\psi$ is Lipschitz continuous in the second variable, i.e. there exists $K >0$   such that $|\psi(t,x)-\psi(t,y)| \leq K |x-y|$ for all $t\in [0, T]$ and all $x, y \in I_{\mu}$.
\end{assump}
\noindent We remark that our canonical examples of the normal and the two-point prior both fulfill Assumption~\ref{assumpLip}.

\begin{theorem}[Properties of the value function] \label{T:VF}
\item
\begin{enumerate}
\item
The function  $x \mapsto v(t, x)$ is non-decreasing and convex for any fixed $t \in [0, T]$. 
\item
The function $t \mapsto v(t,x)$ is non-increasing for any fixed $x \in I_\mu$. 
\item
The value function $v$ is continuous on $[0,T]\times I_\mu$. 
\item
There exists a non-decreasing and continuous function $h : [0, T] \to (-\infty,0]$ with $h(T)=0$
such that $\mathcal C=\{(t,x) \in [0, T)\times I_\mu :x > h(t)\}$.  
\item
The value function $(t, x) \mapsto v(t,x)$ solves the boundary value problem 
\begin{IEEEeqnarray}{rCl}
\begin{cases} \label{E:PDEc}
\pt_1 v+\sigma \psi(t, x) \pt_2 v + \frac{1}{2} \psi(t,x)^2 \pt_2^2 v+ x v =  0, & \; x \in \mathcal C, \label{E:PDE}\\
v= 1, & \; x \in \mathcal{D}.
\end{cases}
\end{IEEEeqnarray}
Furthermore, the smooth-fit property holds in that the function $x \mapsto v(t,x)$ is $C^{1}$ for all $t \in [0, T]$.
\end{enumerate}
\end{theorem}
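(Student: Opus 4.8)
The plan is to prove the five items essentially in the stated order, since each rests on the previous ones; the convexity in item~(1) is the crux, and the dispersion properties of Proposition~\ref{T:psigr} together with it drive everything else.

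\emph{Monotonicity and convexity in $x$ (item 1).} Monotonicity is immediate from comparison: as the coefficients $\sigma\psi$ and $\psi$ are Lipschitz in $x$ (Assumption~\ref{assumpLip}), starting from $x_1\le x_2$ the flows satisfy $\hat X^{t,x_1}_{t+s}\le \hat X^{t,x_2}_{t+s}$ for all $s$ by the one-dimensional SDE comparison theorem, and since $z\mapsto e^z$ and the time integral are monotone, $v(t,\cdot)$ is non-decreasing. For convexity I would read \eqref{E:PDE} as a term-structure equation with $x$ in the role of a (signed) short rate and invoke the convexity-preservation results for such equations. The mechanism is seen by formally differentiating \eqref{E:PDE} twice in $x$ and setting $w=\partial_2^2 v$:
\[ \partial_1 w + (\sigma\psi + 2\psi\,\partial_2\psi)\,\partial_2 w + \tfrac12\psi^2\,\partial_2^2 w + c\,w + (\sigma\,\partial_2^2\psi + 2)\,\partial_2 v = 0, \]
with $c$ a locally bounded zeroth-order coefficient. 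The source term $(\sigma\,\partial_2^2\psi+2)\,\partial_2 v$ is non-negative precisely because $\partial_2^2\psi\ge -2/\sigma$ by Proposition~\ref{T:psigr}(2) and $\partial_2 v\ge 0$ by the monotonicity just established; since $w=0$ on $\mathcal D$ and at $t=T$, the maximum principle then forces $w\ge 0$, i.e.\ $v$ is convex. The genuine difficulty is rigour: $v$ is not a priori $C^2$ and there is a free boundary, so I would carry out this argument on a regularised problem (mollified coefficients, penalised obstacle) where $v$ is smooth, obtain convexity there, and pass to the limit — this is exactly the step where the cited convexity machinery is needed, and it is the main obstacle of the whole theorem.

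\emph{Monotonicity in $t$ (item 2).} I would sandwich $v(t,x)\ge \tilde v(t,x)\ge v(t+h,x)$, where $\tilde v(t,x):=\sup_{\tau\le T-t-h}\E^{\Q}[e^{\int_0^\tau \hat X^{t,x}_{t+s}\,\vd s}]$ keeps the $t$-dynamics but shortens the horizon to $T-t-h$. The first inequality is a supremum over a smaller set of stopping times. For the second, both problems have horizon $T-t-h$, but $\tilde v$ runs with dispersion $\psi(t+s,\cdot)$ and $v(t+h,\cdot)$ with $\psi(t+h+s,\cdot)$, and $\psi(t+s,\cdot)\ge\psi(t+h+s,\cdot)$ because $\psi$ is non-increasing in time by Proposition~\ref{T:psigr}(1). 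Convexity from item~(1) then yields that the larger dispersion gives the larger value: the larger-dispersion value $V$ is a supersolution for the smaller-dispersion variational inequality, since the generator difference applied to $V$ equals $\sigma(\psi_1-\psi_2)\,\partial_2 V+\tfrac12(\psi_1^2-\psi_2^2)\,\partial_2^2 V\ge 0$ by $\partial_2 V\ge 0$ and $\partial_2^2 V\ge 0$. Hence $\tilde v(t,x)\ge v(t+h,x)$ and $v$ is non-increasing in $t$.

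\emph{Continuity (item 3).} For fixed $t$, $v(t,\cdot)$ is convex on the open interval $I_\mu$, hence continuous, and convexity together with monotonicity gives local Lipschitz bounds uniform on compact $x$-intervals. Continuity in $t$ I would obtain from continuous dependence of the SDE flow on $(t,x)$ and the uniform integrability supplied by the estimate in the proof of Proposition~\ref{T:OST}, controlling separately the change of horizon (of order $h$) and the change of dynamics through joint continuity and boundedness of $\psi$; combined with the local $x$-Lipschitz bound this upgrades to joint continuity.

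\emph{The boundary and the free-boundary problem (items 4--5).} As $v(t,\cdot)$ is non-decreasing, $\{x:v(t,x)>1\}$ is an up-set, so $\mathcal C=\{x>h(t)\}$ with $h(t):=\inf\{x\in I_\mu:v(t,x)>1\}$. A Dynkin expansion gives $\tfrac{\vd}{\vd s}\E^{\Q}[e^{\int_0^s \hat X^{t,x}}]|_{s=0}=x$, so continuation is optimal for $x>0$ while immediate stopping is optimal for $x<0$ on short enough horizons; this yields $h\le 0$ and $h(t)\to 0=h(T)$ as $t\to T$. Monotonicity of $h$ follows from item~(2): if $v(t,x)=1$ and $t'>t$ then $v(t',x)\le v(t,x)=1$, forcing $v(t',x)=1$, so $\mathcal D$ is closed upward in $t$ and $h$ is non-decreasing. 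Continuity of $v$ makes $\mathcal C$ open, giving upper semicontinuity and hence right-continuity of $h$; left-continuity is the delicate direction, which I would obtain by the standard argument for monotone boundaries, ruling out a downward jump via continuity of $v$ and the interior equation. Finally, on $\mathcal C$ the operator is uniformly elliptic (since $\psi=\Var/\sigma>0$ because $\mu$ charges both half-lines) with continuous coefficients, so interior parabolic regularity gives $v\in C^{1,2}(\mathcal C)$ solving \eqref{E:PDE}, while $v\equiv 1$ on $\mathcal D$; the smooth-fit $\partial_2 v(t,h(t))=0$ then follows from the classical argument, convexity providing $\partial_2 v(t,h(t)^+)\ge 0$ and a comparison of the flows started at $h(t)$ and $h(t)+\varepsilon$ under a common near-optimal rule providing the matching upper bound.
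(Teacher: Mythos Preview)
Your plan matches the paper's proof almost exactly in its architecture: monotonicity by SDE comparison, convexity via the term-structure machinery of \cite{ET08} using precisely the condition $\partial_2^2\psi\ge -2/\sigma$ from Proposition~\ref{T:psigr}(2), and time-decay from $\psi$ being non-increasing in $t$ (Proposition~\ref{T:psigr}(1)) together with convexity. The paper implements items (1)--(2) by Bermudan approximation and direct citation of \cite[Theorems 5.1 and 6.1]{ET08} after the sign change $R=-\hat X$; your regularise-and-pass-to-the-limit description is the same mechanism at a lower level of abstraction.

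There is one genuine omission in your item (4): you never argue that $h(t)>-\infty$, which is part of the claim $h:[0,T]\to(-\infty,0]$. This is not automatic --- if the continuation region extended all the way down, the representation $\mathcal C=\{x>h(t)\}$ would still hold formally with $h(t)=-\infty$. The paper handles this by the $\xi$-trick from Assumption~\ref{assump}: it embeds the time-$0$ problem as the time-$\epsilon$ slice of a problem started at $-\epsilon$ with prior $\xi$, and then for $x$ very negative shows via a hitting-time estimate that the value drops below $1$, contradicting $v_\xi\ge 1$. You should add an argument of this flavour. A second, smaller difference is in item (3): the paper does not use SDE-flow continuity for the time variable but instead a martingality-in-$\mathcal C$ contradiction argument (splitting on the exit time from a small rectangle and exploiting the already-established spatial Lipschitz bound); your flow-continuity route can be made to work but requires care because the two processes live on different time intervals and filtrations, so a coupling on a common Brownian motion is needed before uniform integrability can close the argument.
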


\begin{proof}\item
\begin{enumerate}
\item
\begin{enumerate}[(i)]
\item
The monotonicity of $x\mapsto v(t,x)$ is clear from the representation
\begin{IEEEeqnarray}{rCl} \label{E:vm}
v(t,x) =\sup_{\tau \in \mathcal{T}_{T-t}} \E^{\Q} \lt[ e^{ \int_0^\tau  \hat X^{t,x}_{t+s} \ud s }\rt]
\end{IEEEeqnarray} 
of the value function
together with a comparison theorem, see \cite[Theorem~IX.3.7]{RY}.
\item 
Let us define  $v_{E}(t,x):= \E^{\Q} \lt[ e^{ \int_0^{T-t}  \hat X^{t,x}_{t+s} \ud s }\rt]$ and $u_{E}(t,r):=\E^{\Q} \lt[ e^{ -\int_0^{T-t} \hat R^{t,r}_{t+s} \ud s }\rt]$, where $\hat R= -\hat X$ and so
\[ \vd \hat R_{t}= -\sigma \psi(t, -\hat R_{t}) \ud t - \psi(t, -\hat R_{t}) \ud Z_{t}.\]
Then $v_{E}(t,x) = u_{E}(t, - x)$. Now, the convexity result follows by approximating the value function, starting with $v_{E}$ as the first approximation, by Bermudan options, which preserve convexity by \cite[Theorem 5.1]{ET08} with the needed condition $\pt_2^2 \psi \geq -\frac{2}{\sigma}$
for the theorem to hold ensured by Proposition \ref{T:psigr}.
\end{enumerate}
\item
From Proposition \ref{T:psigr}, the dispersion $\psi$ is non-increasing in $t$, so the claim follows by the Bermudan approximation argument for the value function and Theorem 6.1 in \cite{ET08}.
\item 
First, let $l \in I_{\mu}$ and we will show that there exists a constant $K>0$ such that, for every $t \in [0, T]$, the map $x \mapsto v(t,x)$ is $K$-Lipschitz continuous on $(-\infty, l] \cap I_{\mu}$. Assume for a contradiction that there is no such $K$. Recall that convexity of a single-variable function implies continuity and existence of one-sided derivatives. Hence using a characterisation of convexity saying that a real-valued function $f$ defined on an interval is convex if and only if the function $(x_{1}, x_{2}) \mapsto (f(x_{2})-f(x_{1}))/(x_{2}-x_{1})$ is increasing in both $x_{1}$ and $x_{2}$, we obtain that there is a sequence $\{t_{n}\}_{n\geq0} \subset [0,T]$ such that the sequence of left-derivatives $\pt^{-}_{2}v(t_{n},l)$ diverges to $\infty$. However, taking $\epsilon \in (0, \sup I_{\mu} -l )$, this would imply that $v(t_{n}, l+\epsilon) \tend \infty$, contradicting the fact that $v(t_{n}, l+ \epsilon) \leq v(0, l+ \epsilon) < \infty $ for all $n \in \N$. 

To finish the proof of the continuity of $v$, it suffices to show that $v(t,x)$ is continuous in $t$. To reach a contradiction, 
assume that $t\mapsto v(t,x_0)$ is not continuous at $t=t_0$ for some $x_0$. By time-decay, this means that $v$ has a negative jump.

First consider the case when $v(t_0-,x_0)>v(t_0,x_0)$. By Lipschitz continuity in the second variable, there exists a rectangle
$\mathcal R =(t_0-\delta,t_0)\times (x_0-\delta,x_0+\delta)$ with $\delta>0$ such that
\begin{IEEEeqnarray}{rCl}
\label{johnlennon}
\inf_{(t,x)\in\mathcal R}v(t,x)>v(t_0,x_0+\delta).
\end{IEEEeqnarray}
Thus $\mathcal R\subseteq \mathcal C$.
Let $t\in(t_0-\delta,t_0)$ and $\tau_\mathcal R:=\inf\{u\geq 0: (t+u, \hat X^{t,x_0}_{t+u})\notin\mathcal R\}$.
Then, by martingality in the continuation region (see \cite[Appendix D]{KS2}),
\begin{IEEEeqnarray*}{rCl}
v(t,x_0) &=& \E^{\Q} \lt[ e^{ \int_0^{\tau_{\mathcal R}} \hat X^{t,x_0}_{t+u} \ud u }v(t+\tau_\mathcal R, \hat X^{t,x_0}_{t+\tau_{\mathcal R}})\rt]\\
&\leq& \E^{\Q} \lt[ e^{ \int_0^{t_0-t}\hat X^{t,x_0}_{t+u}\vee 0 \ud u }v(t,x_0+\delta)\Ind_{\{t+\tau_\mathcal R <t_0\}}\rt]\\
&&+\E^{\Q} \lt[e^{ \int_0^{t_0-t}\hat X^{t,x_0}_{t+u}\vee 0 \ud u }v(t_0,x_0+\delta)\Ind_{\{t+\tau_\mathcal R = t_0\}}\rt]\\
&\leq& e^{(t_0-t)(x_0+\delta)^+}v(t,x_0+\delta) \Q(t+\tau_\mathcal R <t_0) +e^{(t_0-t)(x_0+\delta)^+}v(t_0,x_0+\delta)\\
&\to& v(t_0,x_0+\delta)
\end{IEEEeqnarray*}
as $t\to t_0$, which contradicts \eqref{johnlennon}.

Next, consider the case when $v(t_0,x_0)>v(t_0+,x_0)$. We begin by investigating the situation $v(t_0,x_0)>v(t_{0}+, x_{0})>1$. By Lipschitz continuity of $v$ in the second variable and its decay in time, there exists $\mathcal R=(t_0,t_0+\epsilon] \times [x_0-\delta,x_0+\delta]$ with $\epsilon>0$ and $\delta >0$ such that 
\begin{IEEEeqnarray}{rCl}
\label{paul}
v(t_0,x_0)>\sup_{(t,x) \in \mathcal R }v(t,x) \geq \inf_{(t,x) \in \mathcal R }v(t,x)>1.
\end{IEEEeqnarray}
In particular, $\mathcal R \subseteq \mathcal{C}$ and writing $\tau_\mathcal R:=\inf\{u\geq 0: (t_{0}+u, \hat X^{t_{0},x_0}_{t_{0}+u})\notin\mathcal R\}$ we have
\begin{IEEEeqnarray*}{rCl}
v(t_{0},x_0) &=& \E^{\Q} \lt[ e^{ \int_{0}^{\tau_{\mathcal R}} \hat X^{t_{0},x_0}_{t_{0}+u} \ud u }
v(t_0+\tau_\mathcal R, \hat X^{t_{0},x_0}_{t_0+\tau_{\mathcal R}})\rt] \\
&\leq& \E^{\Q} \lt[ e^{ \int_0^{\epsilon}\hat X^{t_{0},x_0}_{t_{0}+u}\vee 0 \ud u }v(t_{0},x_0+\delta)\Ind_{\{\tau_\mathcal R <\epsilon\}}\rt]\\
&&+\E^{\Q} \lt[e^{ \int_0^{\epsilon}\hat X^{t_{0},x_0}_{t_0+u}\vee 0 \ud u }v(t_0+\epsilon,x_0+\delta)\Ind_{\{\tau_\mathcal R = \epsilon\}}\rt]\\
&\leq& e^{\epsilon(x_0+\delta)^+}v(t_{0},x_0+\delta) \Q(\tau_\mathcal R <\epsilon) +e^{\epsilon(x_0+\delta)^+}v(t_0+\epsilon,x_0+\delta)\\
&\to& v(t_0+,x_0+\delta)
\end{IEEEeqnarray*}
as $\epsilon \searrow 0$, which contradicts \eqref{paul}.

Alternatively, suppose that $v(t_0,x_0)>v(t_{0}+, x_{0})=1$. By Lipschitz continuity in the second variable, there exists $\delta>0$ such that 
\begin{IEEEeqnarray}{rCl}
\label{ringo}
\inf_{x\in(x_{0}-\delta, x_{0})}v(t_{0},x)>v(t_0+,x_0)=1.
\end{IEEEeqnarray}
Then $(t_{0}, T]\times (x_{0}-\delta, x_{0}) \subseteq \mathcal{D}$ and so the process $\hat X^{t_{0}, x_{0}-\delta/2}$ hits the stopping region immediately, implying that $(t_{0}, x_{0}-\delta/2) \in \mathcal{D}$; this contradicts $\eqref{ringo}$.

\item 

Existence of a non-decreasing boundary $h : [0,T) \into [-\infty, \infty]$ satisfying $ \mathcal C=\{(t,x) \in [0, T)\times I_\mu :x > h(t)\} $ is a direct consequence of the first two parts above, and we can then define $h(T)=\lim_{t \nearrow T}h(t)$. 
Non-positivity of $h$ is clear from the expression \eqref{E:OSP}, since, for any starting point $(t,x)\in  [0, T)\times (0, \infty)$, the strategy of stopping at the first time $\hat X^{t,x}$ hits $0$ 
gives a value strictly greater than $1$. 

To show that $h$ is bounded from below, assume for a contradiction that $\{0\} \times (-\infty, \infty) \subseteq \mathcal{C}$. Hence, defining $\xi$ as in \eqref{E:EP}, we know that $(-\epsilon,0] \times \R \subseteq \mathcal{C}_{\xi}$, where $\mathcal{C}_{\xi}$ denotes the continuation region for the optimal selling problem started at time $-\epsilon<0$ with the prior $\xi$. 
Writing $v_{\xi}$ to denote the Markovian value function for the selling problem from time $-\epsilon$,
let $-t'\in(-\epsilon,0)$ and let $a<0$ be such that $v_\xi(-t',a)<e^{-at'}$.
Now, let $x \in (-\infty, a)$, and observe that
\begin{IEEEeqnarray*}{rCl}
v_{\xi}(-t',x) &\leq& e^{at'} v_\xi(-t',a)\P \bigg(\sup_{0\leq u \leq t'} \hat X^{-t',x}_{-t'+u} < a\bigg) + v_{\xi}(-t',a) \P \bigg(\sup_{0\leq u \leq t'} \hat X^{-t',x}_{-t'+u} \geq a \bigg) \\
&\tend& e^{at'}v_\xi(-t',a)<1
\end{IEEEeqnarray*}
as $x\searrow -\infty$. This gives a contradiction since $v_{\xi} \geq 1$ by definition. As a result, we can conclude that $h(t) \in (-\infty, 0]$ for all 
$t\in [0,T]$.

For the continuity of $h$, note that continuity together with time-decay of $v$ imply that $h$ is right-continuous with left limits.
Assume for a contradiction that $h(t_0-)<h(t_0)$ for some $t_0\in(0,T)$. Take points $x_1,x_2$ with $h(t_0-)<x_1<x_2<h(t_0)$, 
let $x=(x_1+x_2)/2$, and
consider the rectangle $\mathcal R=(t_0-\delta,t_0)\times(x_1,x_2)\subseteq \mathcal C$ for some $\delta>0$.
For $t\in(t_0-\delta,t_0)$, 
\begin{IEEEeqnarray}{rCl}
\label{E:cont}
v(t,x) &=& \E^{\Q} \lt[ e^{ \int_0^{\tau_{\mathcal R}} \hat X^{t,x}_{t+u} \ud u }
v(t+\tau_\mathcal R, \hat X^{t,x}_{t+\tau_{\mathcal R}})\rt] \\
\notag
&\leq& \Q\left(\tau_{\mathcal R}<t_0-t\right)v(t_0-\delta,x_2) + e^{x_2(t_0-t)}.
\end{IEEEeqnarray}
Now, estimating $\tau_{\mathcal R}$ above with the leaving time of $\mathcal R$ for a Brownian motion
with drift (compare the proof of Proposition~\ref{T:OST}), it is straightforward to check that 
$\Q\left(\tau_{\mathcal R}<t_0-t\right)=o(t_0-t)$. Since $x_2<0$, \eqref{E:cont} thus implies that
$v(t,x)<1$ for $t$ close to $t_0$, which contradicts \eqref{v1}.

The above argument also works to show that $h(T-)=0$.

\item
The proof of \eqref{E:PDEc} is along standard lines (e.g.~see \cite[Theorem 7.7 in Chapter 2]{KS2}), so we do not include it here. 

Let us next establish the smooth-fit property. Since $x\mapsto v(t,x)$ is non-decreasing, 
it suffices to show that 
\[
\lim_{\epsilon \downarrow 0} \frac{v(t, h(t)+\epsilon) -v(t, h(t))}{\epsilon} \leq 0.
\]
Without loss of generality, let $t=0$. Writing $x=h(0)$, it is enough to show that
\[
v(t, x+\epsilon) - v(t, x) = o(\epsilon) \quad \text{as } \epsilon \searrow 0.
\]
Denoting the optimal stopping time when starting at the point $(0, x+\epsilon)$ by $\tau_{\epsilon}$, we have
\begin{IEEEeqnarray*}{rCl}
v(t, x+\epsilon) - v(t, x) &\leq& \E^{\Q} \big[e^{\int_{0}^{\tau_{\epsilon}} \hat X^{0,x+\epsilon}_{u}\ud u} \big] -\E^{\Q}\big[e^{\int_{0}^{\tau_{\epsilon}} \hat X^{0,x}_{u} \ud u}\big] \\
&=& \E^{\Q}\big[e^{\int_{0}^{\tau_{\epsilon}} \hat X^{0,x+\epsilon}_{u} \ud u} (1 - e^{\int_{0}^{\tau_{\epsilon}} \hat X^{0,x}_{u} - \hat X^{0,x+\epsilon}_{u} \ud u}) \big] \\
&\leq& \E^{\Q}\big[e^{\int_{0}^{\tau_{\epsilon}} \hat X^{0,x+\epsilon}_{u} \ud u} \int_{0}^{\tau_{\epsilon}} \hat X^{0,x+\epsilon}_{u} - \hat X^{0,x}_{u} \ud u \big] \\
&\leq& \E^{\Q}\big[e^{\int_{0}^{\tau_{\epsilon}} \hat X^{0,x+\epsilon}_{u} \ud u} \big(\tau_{\epsilon}\int_{0}^{\tau_{\epsilon}} (\hat X^{0,x+\epsilon}_{u} - \hat X^{0,x}_{u})^{2} \ud u \big)^{1/2}\big] \\
&\leq& \E^{\Q}\big[\tau_{\epsilon} e^{2\int_{0}^{\tau_{\epsilon}} \hat X^{0,x+\epsilon}_{u} \ud u}  \big]^{1/2} \E^{\Q} \big[ \int_{0}^{\tau_{\epsilon}} ( \hat  X^{0,x+\epsilon}_{u} - \hat X^{0,x}_{u})^{2} \ud u \big]^{1/2},
\end{IEEEeqnarray*}
where the penultimate inequality follows from Jensen's inequality and the last one from Cauchy-Schwartz.
Since the boundary $h$ is non-decreasing, with the help of L\'evy's modulus of continuity theorem as well as the law of the iterated logarithm, we see that $\tau_{\epsilon} \tend 0$ a.s.~as $\epsilon \searrow 0$. Hence, by the dominated convergence theorem,
\[ \E^{\Q}[\tau_{\epsilon} e^{2\int_{0}^{\tau_{\epsilon}} \hat X^{0,x+\epsilon}_{u} \ud u}] \tend 0 \quad \text{as } \epsilon \searrow 0\]
with the dominating function as in the proof of Proposition \ref{T:OST}.

To complete the proof of smooth-fit, it suffices to show that  
\[
\E^{\Q}\lt[\lt(\int_{0}^{\tau_{\epsilon}} \hat X^{0,x+\epsilon}_{u} - \hat X^{0,x}_{u} \ud u \rt)^{2}\rt] = O(\epsilon^{2}) \quad \text{as } \epsilon \tend 0.
\]
To this end, 
\begin{IEEEeqnarray*}{rCl}
\E^{\Q}\lt[\lt(\int_{0}^{\tau_{\epsilon}} \hat X^{0,x+\epsilon}_{u} - \hat X^{0,x}_{u} \ud u \rt)^{2}\rt]
&\leq& T \E^{\Q}\lt[\int_{0}^{T} (\hat X^{0,x+\epsilon}_{u} - \hat X^{0,x}_{u})^{2} \ud u \rt] \\
&\leq& T^{2} \E^{\Q}\lt[\sup_{0\leq u \leq T} (\hat X^{0,x+\epsilon}_{u} - \hat X^{0,x}_{u})^{2}\rt]\\
&\leq& c\epsilon^{2},
\end{IEEEeqnarray*}
where $c$ is a constant dependent on $T$, $\sigma$, and the Lipschitz constant of $\psi$. In the above, the first inequality comes from Jensen's inequality, and the last inequality is a standard estimate coming from an application of Gronwall's inequality combined with Doob's $L^{2}$ inequality. This finishes the proof of the claim. 
\end{enumerate}
\end{proof}

\begin{SCfigure}[][h!]
\caption{The value function $v(t,x)$ in the case of a normal prior with standard deviation $\gm=0.5$, the market volatility $\sigma=0.2$.}
\includegraphics[width=0.55\textwidth]{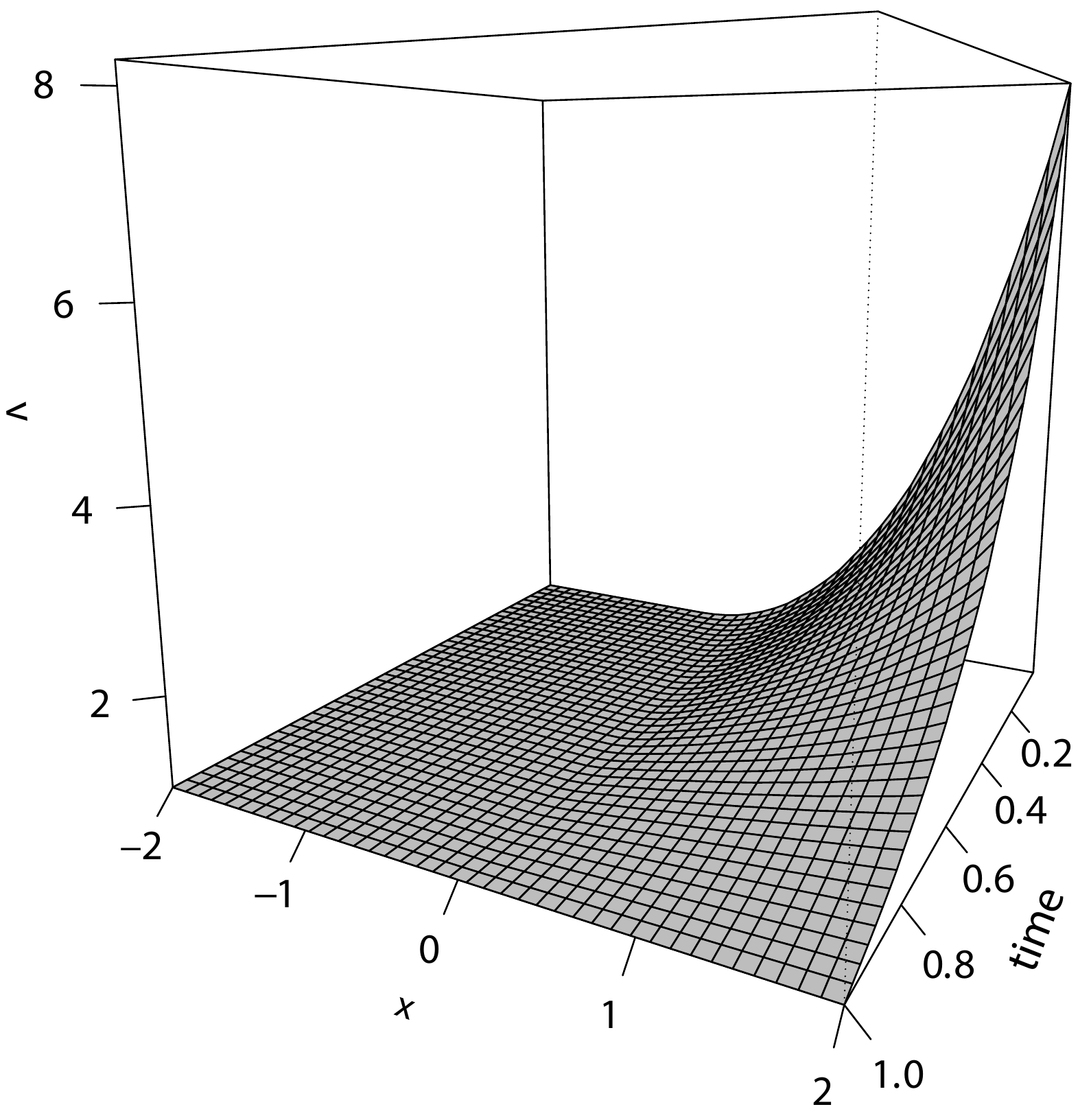}
\end{SCfigure}

\section{An integral equation for the boundary}

In this section we show that the optimal stopping boundary can be characterised as 
the unique solution of a non-linear integral equation. The proof follows along similar lines
as in \cite{J} and \cite{P2}.

\begin{theorem}[Optimal stopping boundary] 
The stopping boundary $h$ is the unique solution to the integral equation
\begin{IEEEeqnarray}{rCl} \label{E:IE}
\E^\Q \lt[ e^{\int_t^T  \hat X^{t, h(t)}_{u} \ud u } \rt] &=& 1 + \int_t^T \E^\Q\lt[  e^{\int_t^s  \hat X^{t, h(t)}_u \ud u } \hat X^{t, h(t)}_s 1_{\{ \hat X^{t,h(t)}_s \leq h(s)\}} \rt] \ud s 
\end{IEEEeqnarray}
in the class of non-positive continuous functions.
\end{theorem}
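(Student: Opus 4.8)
The argument has two parts: existence (that $h$ solves \eqref{E:IE}) and uniqueness, following the scheme of \cite{J,P2}. First I would verify that the optimal boundary $h$ satisfies \eqref{E:IE}. Fix $(t,x)$ and apply It\^o's formula to $s\mapsto e^{\int_t^s \hat X^{t,x}_u\,\ud u}\,v(s,\hat X^{t,x}_s)$ on $[t,T]$. By Theorem~\ref{T:VF}, $x\mapsto v(t,x)$ is $C^1$ with the smooth-fit property across $h$, so $\pt_2 v$ is continuous and no local-time term arises on the curve $\{x=h(s)\}$; the second-order part is then governed by \eqref{E:PDEc}. The operator $\pt_1+\sigma\psi\pt_2+\tfrac12\psi^2\pt_2^2+x$ applied to $v$ vanishes on $\mathcal C$ and, since $v\equiv1$ on $\mathcal D$, equals $\hat X_s$ there, hence equals $\hat X_s\,1_{\{\hat X_s\le h(s)\}}$ along the path. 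The $\ud Z$-integral is a true martingale because $\pt_2 v$ is bounded (convexity and Assumption~\ref{assumpLip}) and the exponential is integrable by the estimate in the proof of Proposition~\ref{T:OST}. Taking $\Q$-expectations, using $v(T,\cdot)=1$, and specialising to $x=h(t)$ where $v=1$, Fubini's theorem yields \eqref{E:IE}.

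For uniqueness, let $g:[0,T]\to(-\infty,0]$ be any continuous solution of \eqref{E:IE} and set
\[
U^g(t,x):=\E^{\Q}\lt[e^{\int_t^T\hat X^{t,x}_u\,\ud u}\rt]-\int_t^T\E^{\Q}\lt[e^{\int_t^s\hat X^{t,x}_u\,\ud u}\,\hat X^{t,x}_s\,1_{\{\hat X^{t,x}_s\le g(s)\}}\rt]\ud s,
\]
so that \eqref{E:IE} for $g$ is precisely the statement $U^g(t,g(t))=1$ for all $t$. I would first upgrade this to $U^g(t,x)=1$ for every $x\le g(t)$: for such $x$ let $\sigma_g:=\inf\{s\ge t:\hat X^{t,x}_s\ge g(s)\}\wedge T$; on $[t,\sigma_g)$ the indicator equals $1$, and by the strong Markov property together with $U^g=1$ at $(\sigma_g,\hat X_{\sigma_g})$, the representation reduces to $U^g(t,x)=\E^{\Q}[e^{\int_t^{\sigma_g}\hat X_u\ud u}]-\E^{\Q}[\int_t^{\sigma_g}e^{\int_t^s\hat X_u\ud u}\hat X_s\,\ud s]$. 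Since $\frac{\ud}{\ud s}e^{\int_t^s\hat X_u\ud u}=\hat X_s\,e^{\int_t^s\hat X_u\ud u}$, the last expectation equals $\E^{\Q}[e^{\int_t^{\sigma_g}\hat X_u\ud u}]-1$, whence $U^g(t,x)=1$. A Feynman--Kac argument shows that $U^g$ solves \eqref{E:PDEc} on $\{x>g(t)\}$ with $U^g(T,\cdot)=1$, and the same Markov reduction with $\tau_g:=\inf\{s\ge t:\hat X^{t,x}_s\le g(s)\}\wedge T$ gives $U^g(t,x)=\E^{\Q}[e^{\int_t^{\tau_g}\hat X_u\ud u}]$; as $v$ is the supremum over all stopping rules, $U^g\le v$ everywhere.

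The delicate step is $g\ge h$, and it is the main obstacle. All the soft relations just established ($U^g\le v$, $U^g=1$ below $g$, and $v$, $U^g$ solving the same equation above their respective boundaries) remain consistent with $g\ne h$, so the coincidence of boundaries is forced only through boundary regularity. Concretely, I would apply the change-of-variable formula with local time on curves (as in \cite{J,P2}) to $e^{\int_t^s\hat X_u\ud u}\,U^g(s,\hat X_s)$ across $\{x=g(s)\}$; using that $g$ solves \eqref{E:IE} forces the local-time term to vanish, which establishes the smooth-fit property $\pt_2 U^g(\cdot,g(\cdot))=0$ for the candidate. Granting this, suppose $g(t_0)<h(t_0)$ and consider $\mathcal A=\{(t,x):g(t)<x<h(t)\}$, on which $v\equiv1$ while $U^g$ solves the homogeneous equation. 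Then $\tilde D:=v-U^g\ge0$ satisfies $(\pt_1+\sigma\psi\pt_2+\tfrac12\psi^2\pt_2^2+x)\tilde D=x<0$ and vanishes together with its $x$-derivative on the lower edge $\{x=g(t)\}$; Hopf's lemma then forces $\tilde D\equiv0$, i.e.\ $U^g\equiv1$ on $\mathcal A$, which is impossible since $(\pt_1+\sigma\psi\pt_2+\tfrac12\psi^2\pt_2^2+x)1=x<0$ there. Hence $\mathcal A=\varnothing$ and $g\ge h$. Establishing the requisite $C^1$-regularity of $U^g$ at $g$ and making this Hopf/maximum-principle argument rigorous is the technical heart of the proof.

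The reverse inclusion is then clean. Suppose some $(t_0,x_0)$ had $x_0\le g(t_0)$ but $x_0>h(t_0)$, so $U^g(t_0,x_0)=1$ while $v(t_0,x_0)>1$, and let $\tau_h$ be the (true) optimal rule from $(t_0,x_0)$, which is not immediate. Applying the representation identity for $U^g$ with $\theta=\tau_h$ and using $g\ge h$ to get $U^g(\tau_h,\hat X_{\tau_h})=1$ yields
\[
v(t_0,x_0)-1=\E^{\Q}\lt[\int_{t_0}^{\tau_h}e^{\int_{t_0}^s\hat X_u\ud u}\,\hat X_s\,1_{\{\hat X_s\le g(s)\}}\,\ud s\rt].
\]
The integrand is non-positive, because $\hat X_s\le g(s)\le0$ on $\{\hat X_s\le g(s)\}$, so the right-hand side is $\le0$, contradicting $v(t_0,x_0)>1$. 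Hence $\{x\le g(t)\}\subseteq\{x\le h(t)\}$, i.e.\ $g\le h$, and combined with $g\ge h$ this gives $g=h$, proving uniqueness.
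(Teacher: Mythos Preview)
Your existence argument and the first two uniqueness steps ($U^g=1$ below $g$ and $U^g\le v$) match the paper's approach. The problem is your proof of $g\ge h$. You attempt to establish smooth-fit of $U^g$ across $g$ via a local-time argument, then invoke Hopf's lemma on the strip $\mathcal A=\{g(t)<x<h(t)\}$. You yourself flag the $C^1$-regularity of $U^g$ at $g$ as an unproven ``technical heart''; in fact there is no reason to expect it a priori, since $U^g$ is built from an \emph{arbitrary} continuous non-positive $g$ rather than from a genuine free boundary. You also need that $U^g$ solves the PDE \eqref{E:PDEc} classically on $\{x>g(t)\}$, which you assert via Feynman--Kac but do not justify; the indicator $1_{\{\hat X_s\le g(s)\}}$ in the definition of $U^g$ makes this non-trivial. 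Without these two ingredients the Hopf/maximum-principle contradiction does not get off the ground, so the step $g\ge h$ is a genuine gap.

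The paper avoids all of this PDE machinery. Alongside the martingale
\[
M^{U^g}_s=U^g(s,\hat X_s)\,e^{\int_t^s\hat X_u\,\ud u}-\int_t^s e^{\int_t^r\hat X_u\,\ud u}\hat X_r\,1_{\{\hat X_r\le g(r)\}}\,\ud r
\]
it uses the analogous martingale $M^v$ built from the true value function $v$ and the true boundary $h$ (this comes directly from \eqref{E:VXT}). Assuming $h(t)>g(t)$ and starting at $x=g(t)$, both $v$ and $U^g$ equal $1$ there; optional sampling of $M^v-M^{U^g}$ at $\gamma_h=\inf\{s:\hat X_s\ge h(s)\}\wedge T$ gives
\[
0=\E^{\Q}\!\lt[e^{\int_t^{\gamma_h}\hat X_u\,\ud u}\bigl(v-U^g\bigr)(\gamma_h,\hat X_{\gamma_h})\rt]
-\E^{\Q}\!\lt[\int_t^{\gamma_h} e^{\int_t^r\hat X_u\,\ud u}\hat X_r\,1_{\{g(r)<\hat X_r\le h(r)\}}\,\ud r\rt].
\]
The first expectation is non-negative by $U^g\le v$, and the second integrand is strictly negative with positive probability (since $\hat X_r\le h(r)\le 0$ and $h>g$ near $t$), yielding the contradiction. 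No regularity of $U^g$, no PDE for $U^g$, and no Hopf lemma are needed. Once $g\ge h$ is in hand, the paper concludes $v=U^g$ from \eqref{E:VXT} and \eqref{E:vtilde}, and then $v>1$ on $\{x>h(t)\}$ combined with $U^g=1$ on $\{x\le g(t)\}$ forces $g\le h$. Your alternative argument for $g\le h$ is correct, but you should replace your $g\ge h$ step by the martingale-difference argument above.
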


\begin{proof} 
An application of It\^{o}'s formula  (more precisely, its extension proved in \cite{P}, which can be applied thanks to the monotonicity of $h$)  
to $v(s, \hat X^{t,x}_s) e^{\int_t^s \hat X^{t,x}_u \ud u}$ yields
\begin{IEEEeqnarray}{rCl} \label{E:VITO}
v(s, \hat X^{t,x}_s) e^{\int_t^s  \hat X^{t,x}_u \ud u} &=&
v(t, \hat X^{t,x}_t) + \int_t^s e^{\int_t^r \hat X^{t,x}_u \ud u } \lt( \mathcal{L}_{\hat X^{t,x}} v(r, \hat X^{t,x}_r)  +  \hat  X^{t,x}_r v(r, \hat X^{t,x}_{r}) \rt) \ud r \nonumber \\ 
&&+ \int_t^s e^{\int_t^r  \hat X^{t,x}_u \ud u } \psi(r, \hat X^{t,x}_r) \pt_{2} v(r, \hat X^{t,x}_{r}) \ud Z_r \,. 
\end{IEEEeqnarray} 
Let us introduce a localising sequence  $\tau_{n}:= \inf \{ r \geq t \,:\,   \hat X^{t,x}_{r} \geq n \} \wedge T$; it satisfies $\tau_{n} \nearrow T$ a.s.~as $n \tend \infty$. Since, for all $n \in \N$,
\[
\E^{\Q} \lt[ \int_t^{s\wedge\tau_n} e^{\int_t^r  \hat X^{t,x}_u \ud u } \psi(r, \hat X^{t,x}_r) \pt_{2} v(r, \hat X^{t,x}_{r}) \ud Z_r \rt]=0,
\]
from \eqref{E:VITO} we get
\begin{IEEEeqnarray*}{rCl}
\E^\Q[v(T\wedge \tau_{n}, \hat X^{t,x}_{T\wedge \tau_{n}}) e^{\int_t^{T\wedge \tau_{n}} \hat X^{t,x}_u \ud u}] &=& v(t, x) +  \E^\Q \lt[ \int_t^{T\wedge \tau_{n}} e^{\int_t^r  \hat X^{t,x}_u \ud u }  \hat X^{t,x}_r \Ind_{\{\hat X^{t,x}_r \leq h(r)\}} \ud r \rt].
\end{IEEEeqnarray*}
Letting $n \tend \infty$, the equation becomes
\begin{IEEEeqnarray}{rCl}
 \label{E:VXT}
\E^\Q[v(T, \hat X^{t,x}_T) e^{\int_t^T \hat X^{t,x}_u \ud u}] &=& v(t, x) +  \int_t^T \E^\Q \lt[e^{\int_t^r  \hat X^{t,x}_u \ud u }  \hat X^{t,x}_r \Ind_{\{\hat X^{t,x}_r \leq h(r)\}} \rt]\ud r \,.
\end{IEEEeqnarray}
Here, the left-hand side is obtained by dominated convergence as \mbox{$v(T\wedge \tau_{n}, \hat X^{t,x}_{T\wedge \tau_{n}}) e^{\int_t^{T\wedge \tau_{n}} \hat X^{t,x}_u \ud u}$} is dominated by $e^{2T( \sup_{t\leq u \leq T} \hat X^{t,x}_u\vee 0)}$, which is integrable; the right-hand side comes from monotone convergence. Substitution $x=h(t)$ in \eqref{E:VXT} gives 
\begin{IEEEeqnarray*}{rCl}
\E^\Q[e^{\int_t^T  \hat X^{t,h(t)}_u \ud u}] &=& 1 +  \int_t^T \E^\Q \lt[e^{\int_t^r  \hat X^{t,h(t)}_u \ud u }  \hat X^{t,h(t)}_r \Ind_{\{\hat X^{t,h(t)}_r \leq h(r)\}} \rt] \ud r \,,
\end{IEEEeqnarray*}
which shows that $h$ solves the integral equation \eqref{E:IE}.

For uniqueness, assume that $t \mapsto k(t)$ is another non-positive continuous solution to \eqref{E:IE} and define
\begin{IEEEeqnarray}{rCl}
\label{E:vtilde}
\tilde{v}(t, x) := \E^\Q [e^{\int_t^T  \hat X^{t,x}_u \ud u} ] - \E^\Q \lt[ \int_t^T e^{\int_t^r  \hat X^{t,x}_u \ud u}  \hat X^{t,x}_r \Ind_{\{ \hat X^{t,x}_r \leq k(r) \}} \ud r \rt].
\end{IEEEeqnarray}
Using \eqref{E:VXT}, \eqref{E:vtilde} and the Markov property, the two processes defined for $s\in [t, T]$ as 
\begin{IEEEeqnarray*}{rCl}
M^{\tilde{v}}_s := \tilde{v}(s, \hat X^{t,x}_ s ) e^{\int_t^s  \hat X^{t,x}_u \ud u} - \int_t^s e^{\int_t^r  \hat X^{t,x}_u \ud u}  \hat X^{t,x}_r \Ind_{\{ \hat X^{t,x}_r \leq k(r) \}} \ud r
\end{IEEEeqnarray*}
and
\begin{IEEEeqnarray*}{rCl}
M^v_s := v(s, \hat X^{t,x}_ s ) e^{\int_t^s  \hat X^{t,x}_u \ud u} - \int_t^s e^{\int_t^r  \hat X^{t,x}_u \ud u}  \hat X^{t,x}_r \Ind_{\{ \hat X^{t,x}_r \leq h(r) \}} \ud r
\end{IEEEeqnarray*}
are easily verified to be $\Q$-martingales.

\noindent 
\underline{Claim 1}: $\tilde{v}(t,x) = 1$ for $x\leq k(t)$. 

Let $x \leq k(t)$ and define $\gamma_k := \inf \{ s \geq 0 : \hat X^{t,x}_{t+s} \geq k(t+s) \} \wedge (T-t)$. Then 
\begin{IEEEeqnarray*}{rCl}
M^{\tilde v}_{t + \gamma_k} &=&  \tilde{v}(t+\gamma_k, \hat X^{t,x}_{t+\gamma_k} ) e^{\int_t^{t+\gamma_k}  \hat X^{t,x}_u \ud u} - \int_t^{t+\gamma_k} e^{\int_t^r  \hat X^{t,x}_u \ud u}  \hat X^{t,x}_r \Ind_{\{ \hat X^{t,x}_r \leq k(r) \}} \ud r\\
&=& e^{\int_t^{t+\gamma_k}  \hat X^{t,x}_u \ud u} - \int_t^{t+\gamma_k} e^{\int_t^r  \hat X^{t,x}_u \ud u}  \hat X^{t,x}_r \ud r \\
&=& 1,
\end{IEEEeqnarray*}
where the second equality follows from \eqref{E:IE}. By optional sampling, 
\begin{IEEEeqnarray*}{rCl}
\tilde{v}(t,x) &=&  M^{\tilde v}_t = \E^\Q[M^{\tilde v}_{t + \gamma_k}]=1,
\end{IEEEeqnarray*}
which finishes the proof of Claim 1.

\noindent
\underline{Claim 2}: $\tilde{v} \leq v$. 

Suppose $x > k(t)$ and define $\tau_k:=\inf \{s \geq 0 \,: \hat X^{t,x}_{t+s} \leq k(t+s) \} \wedge (T-t)$. Then 
\begin{IEEEeqnarray*}{rCl}
\tilde{v}(t,x) &=& \E^\Q \lt[ \tilde{v}(t+\tau_k, \hat X^{t,x}_{t+\tau_k}) e^{\int_t^{t+\tau_k}  \hat X^{t,x}_u \ud u} \rt] \\
&=& \E^\Q \lt[ e^{\int_t^{t+\tau_k}  \hat X^{t,x}_u} \ud u \rt] \\
&\leq & v(t,x),
\end{IEEEeqnarray*}
with the first equality following from the martingality of $M^{\tilde{v}}$ and the optional sampling theorem, the second by the definition of $\tilde{v}$ and \eqref{E:IE}. Combining this with Claim 1, the result is obtained. 

\noindent
\underline{Claim 3}: $h \leq k$.

Assume for a contradiction that $h(t) > k(t)$ for some $t$. 
Let $x=k(t)$ and define $\gamma_h := \inf \{ s \geq 0 : \hat X^{t,x}_{t+s} \geq h(t+s) \} \wedge (T-t)$.  Then 
\begin{IEEEeqnarray*}{rCl}
0 &=& v(t, x)- \tilde{v}(t,x) \\
&=& \E^\Q[ M^v_{t+\gamma_h} - M^{\tilde{v}}_{t+\gamma_h}] \\
&=& \E^\Q \lt[ e^{\int_t^{t+\gamma_h}  \hat X^{t,x}_u \ud u} (v(t+\gamma_h, \hat X^{t,x}_{t+\gamma_h}) - \tilde{v}(t+\gamma_h, \hat X^{t,x}_{t+\gamma_h}) ) \rt]  \\
&&- \E^\Q \lt[ \int_t^{t+\gamma_h} e^{\int_t^{r}  \hat X^{t,x}_u \ud u}  \hat X^{t,x}_r \Ind_{\{ X^{t,x}_r \in (k(r), h(r)]\}} \ud r\rt].
\end{IEEEeqnarray*}
In the first equality above, $v(t,x)=1$ by the assumption  $h(t)> k(t)$, and $\tilde{v}(t,x)=1$ by the definition of $\tilde{v}$ and \eqref{E:IE}. The second equality comes from optional sampling. 
In the final expression, the first term is non-negative by Claim 2, the second term (including the minus sign in front) is strictly positive by the assumption $h(t)> k(t)$ together with the continuity and the non-positivity of $k$ and $h$. Hence we have obtained a contradiction,
which proves the claim.

It follows from \eqref{E:VXT}, \eqref{E:vtilde}, Claim 2, and Claim 3 that $v=\tilde v$.
Since $v(t,x)>1$ for $x>h(t)$, Claim 1 yields that $h\geq k$. In view of Claim 3, this finishes the proof.
\end{proof}

%
%


\section{Parameter dependence}

\subsection{Dependence of the value function on the market volatility}

A large volatility $\sigma$ makes the observation process noisy, slowing down the speed of 
learning about the drift. Since the fluctuations are trend-free, the intuition is that the agent should benefit from a smaller market volatility $\sigma$. While a full proof of this intuitive remark appears to be challenging, we have the following sufficient condition which guarantees monotonicity in $\sigma$.

\begin{theorem}
\label{monotonicity}
Assume that the dispersion function $\psi$ is such that $\sigma\psi(t,x)$ is non-increasing in $\sigma$. 
Then the value $V$ in \eqref{E:OS} is non-increasing in $\sigma$.
\end{theorem}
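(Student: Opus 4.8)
The plan is to exploit the measure-change reformulation \eqref{E:OSN} together with the explicit dynamics of the conditional mean. Recall that under $\Q$ we have $V = \sup_{\tau} \E^{\Q}[e^{\int_0^\tau \hat X_s \, \ud s}]$, where $\hat X$ solves \eqref{X}, namely $\vd \hat X_t = \sigma \psi(t, \hat X_t) \ud t + \psi(t, \hat X_t) \ud Z_t$ with $Z$ a $\Q$-Brownian motion. The key observation is that the payoff $e^{\int_0^\tau \hat X_s \, \ud s}$ depends on $\sigma$ \emph{only through the law of the process} $\hat X$, and this law is in turn governed entirely by the drift coefficient $\sigma \psi(t,x)$ and the dispersion coefficient $\psi(t,x)$. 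The strategy is therefore to compare the two processes $\hat X$ arising from two different volatilities $\sigma_1 < \sigma_2$ and to deduce a pathwise or distributional ordering that propagates to the payoff and hence to the value.

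First I would fix $\sigma_1 < \sigma_2$ and write $\hat X^{(i)}$ for the conditional-mean process associated with volatility $\sigma_i$, both started from a common initial condition. Under the hypothesis that $\sigma \mapsto \sigma \psi(t,x)$ is non-increasing, the drift coefficient of $\hat X^{(1)}$ dominates that of $\hat X^{(2)}$ pointwise, i.e.\ $\sigma_1 \psi_{\sigma_1}(t,x) \geq \sigma_2 \psi_{\sigma_2}(t,x)$. Since both processes are one-dimensional diffusions driven by the same Brownian motion, I would invoke a comparison theorem for SDEs (of the type in \cite[Theorem~IX.3.7]{RY}, already used in the proof of Theorem~\ref{T:VF}) to conclude a pathwise ordering $\hat X^{(1)}_t \geq \hat X^{(2)}_t$ for all $t$, almost surely. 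The delicate point is that the comparison theorem as usually stated requires the two processes to share the \emph{same} dispersion coefficient, whereas here $\psi_{\sigma_1}$ and $\psi_{\sigma_2}$ differ; so I would need either a version of the comparison theorem allowing ordered drifts with a common dispersion after a suitable change of variable, or a more careful coupling argument that accommodates the differing diffusion coefficients.

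Granting the pathwise ordering $\hat X^{(1)}_s \geq \hat X^{(2)}_s$, the monotonicity of the value follows readily: for any fixed stopping time $\tau$ one has $\int_0^\tau \hat X^{(1)}_s \, \ud s \geq \int_0^\tau \hat X^{(2)}_s \, \ud s$ pathwise, whence $e^{\int_0^\tau \hat X^{(1)}_s \, \ud s} \geq e^{\int_0^\tau \hat X^{(2)}_s \, \ud s}$, and taking expectations and then the supremum over $\tau$ gives $V(\sigma_1) \geq V(\sigma_2)$. I would take care that the admissible stopping times are expressed with respect to a filtration that does not itself depend on $\sigma$ in a way that breaks the comparison; the cleanest route is to use the common driving Brownian motion $Z$ and compare on its filtration.

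The main obstacle I anticipate is precisely the mismatch in dispersion coefficients noted above: a naive application of the standard comparison theorem is not licensed when both the drift and the diffusion coefficient vary with $\sigma$. I expect the resolution to require reducing to a common noise scale — for instance by a time-change or a Lamperti-type transformation that normalises the dispersion — after which the ordering of the transformed drifts can be checked under the stated hypothesis, and the pathwise comparison legitimately applied. Verifying that the hypothesis $\sigma \psi(t,x)$ non-increasing survives such a transformation, and that the transformation interacts correctly with the running integral $\int_0^\tau \hat X_s \, \ud s$, is the technical heart of the argument.
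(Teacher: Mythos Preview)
Your proposal has a genuine gap at exactly the point you flag: a pathwise ordering $\hat X^{(1)}_t \geq \hat X^{(2)}_t$ is \emph{not} available when the two diffusions have different dispersion coefficients, and the fixes you sketch do not repair this. Under the hypothesis one does get that both the drift $\sigma\psi$ and the dispersion $\psi$ are non-increasing in $\sigma$ (the latter because $\psi_{\sigma_1}\geq(\sigma_2/\sigma_1)\psi_{\sigma_2}\geq\psi_{\sigma_2}$ for $\sigma_1<\sigma_2$), but no standard comparison theorem yields a.s.\ path ordering from ordered drift \emph{and} ordered dispersion: a process with smaller drift and smaller volatility need not stay below one with larger drift and larger volatility. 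A Lamperti-type normalisation would equalise the dispersions but would distort the running integral $\int_0^\tau \hat X_s\,\ud s$, so the monotone comparison at the level of the payoff is lost; a time-change has the same defect.

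The paper avoids pathwise comparison altogether. Having noted that both coefficients of $\hat X$ in \eqref{X} are non-increasing in $\sigma$, it invokes Theorem~6.1 of \cite{ET08}, a parameter-monotonicity result for term-structure-type problems, applied through the Bermudan approximation of $v$ already used in the proof of Theorem~\ref{T:VF}. The mechanism is that the value function is convex and non-decreasing in $x$ (Theorem~\ref{T:VF}, part~1): convexity makes the value monotone in the dispersion coefficient, and monotonicity in $x$ makes it monotone in the drift coefficient, so decreasing both decreases $v$. Thus the comparison happens at the level of the value function (via PDE/convexity arguments), not at the level of sample paths of $\hat X$.
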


\begin{proof}
If $\sigma\psi(t,x)$ is non-increasing in $\sigma$, then both the drift term and the diffusion term of $\hat X$ are non-increasing in $\sigma$.
Therefore, Theorem 6.1 from \cite{ET08} can be applied to prove that the value function $v$ is decreasing in $\sigma$.
\end{proof}

\begin{example}[Two-point distribution]
Suppose $X$ has a two-point prior distribution $\mu = (1 - \pi) \delta_l + \pi \delta_h$, where $l< h$. 
Then $\sigma \psi(t,x)=(h-x) (x - l)$, so $V$ is decreasing in $\sigma$. 
\end{example}

\begin{example}[Normal distribution]
Suppose the prior distribution of $X$ is $N(m, \gamma^2)$. Then $\sigma\psi(t,x)=\frac{\sigma^2 \gamma^2}{\sigma^2+ \gamma^2 t}$,
which is {\em increasing} in $\sigma$. Thus Theorem~\ref{monotonicity} does not apply.
\end{example}

The difficulty in proving the intuitive conjecture that the initial value $V$ in \eqref{E:OS} should be decreasing in the volatility $\sigma$ lies in the fact that it is not true in general that the Markovian value function $v$ in \eqref{E:VM} is decreasing in $\sigma$. We can see this in the case of a normal prior in Figure \ref{F:Diff}. The picture depicts the difference between two Markovian value functions for the same normal prior with standard deviation $\gm=0.5$, but different volatilities $\sigma$. Nevertheless, the same picture shows that at time $t=0$, the difference is positive, so conforming with our intuitive conjecture.

\begin{SCfigure}[][h!] \label{F:Diff}
\centering
\caption{The difference $v_{0.2}-v_{0.5}$ between two value functions; $v_{0.2}$ and $v_{0.5}$ denote the value functions in the cases of the market volatility $\sigma$ being equal to $0.2$ and $0.5$, respectively. }
\includegraphics[width=0.6\textwidth]{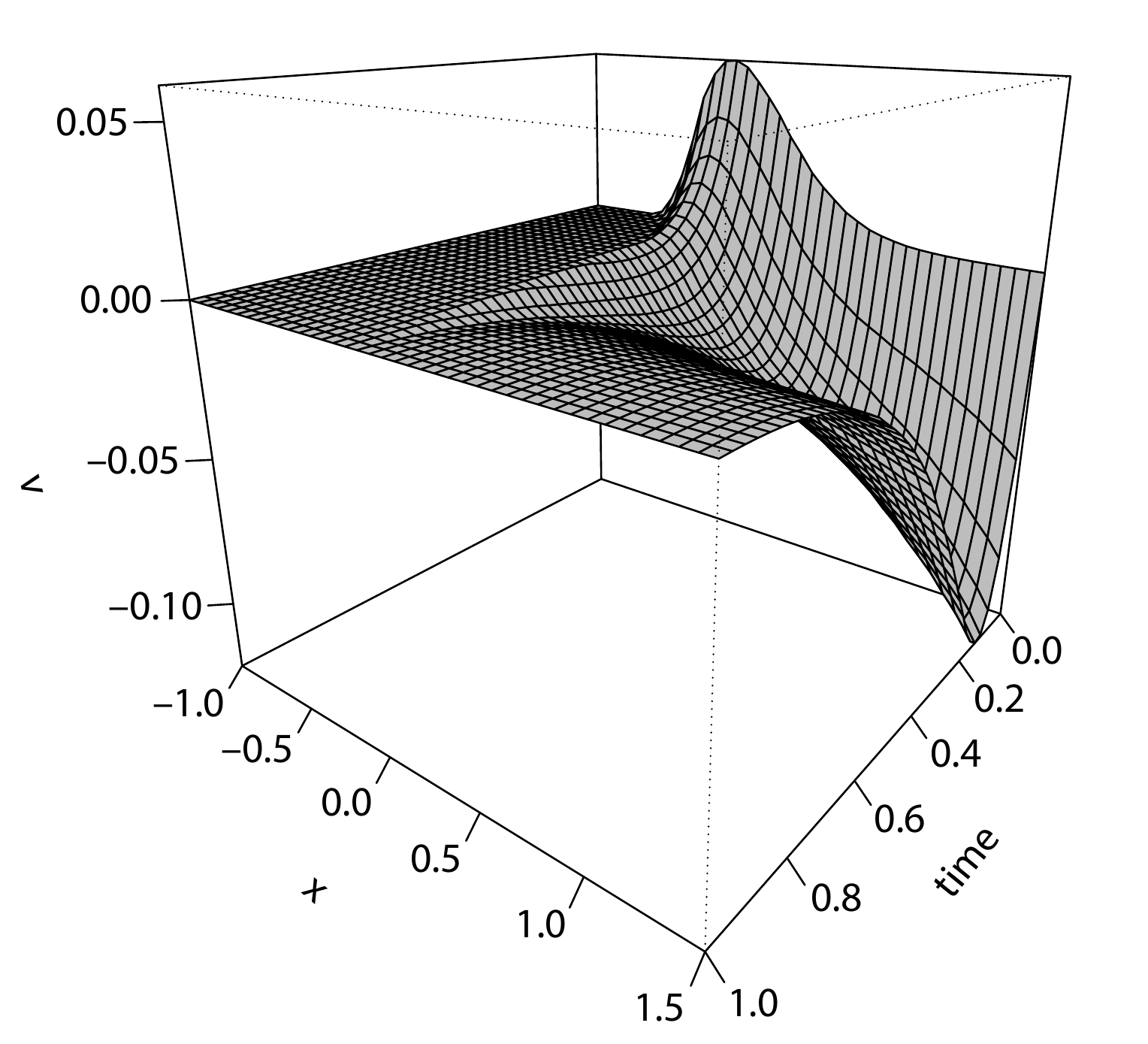}
\end{SCfigure}

As far as the optimal stopping boundaries are concerned, the lack of monotonicity of the Markovian value function in the volatility $\sigma$ manifests in that the stopping boundaries for different values of $\sigma$ may intersect. An example of this appears in Figure~\ref{F:SigB}. The same graph also provides intuition about how the shape of the boundary changes as one varies the parameter $\sigma$. In particular, we get an impression what boundary to expect as $\sigma$ approaches zero or grows to infinity.

\begin{SCfigure}[][h] \label{F:SigB}
\includegraphics[width=0.65\textwidth,clip, trim=6mm 9mm 15mm 0mm]{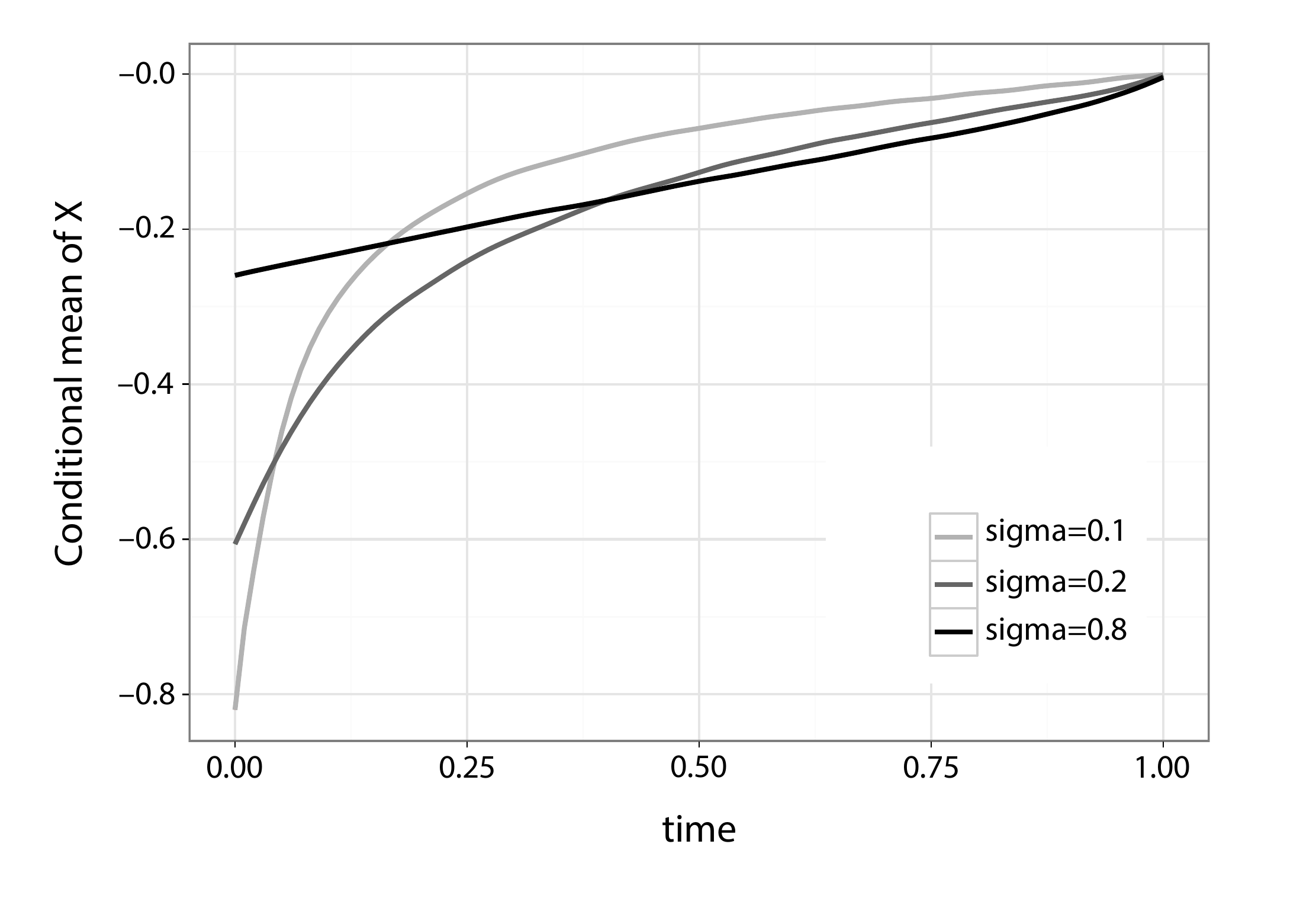}
\caption{Optimal stopping boundaries for different values of market volatility $\sigma$ in the case a normal prior with standard deviation $\gamma=0.5$.}
\end{SCfigure}

\subsection{Dependence of the value function on the initial prior}

\begin{theorem} \label{T:PrM}
Assume that $\mu_1$ and $\mu_2$ are two prior distributions such that the corresponding volatilities $\psi_1$ and $\psi_2$ 
satisfy $\psi_1(t,x)\leq \psi_2(t,x)$ for all $(t,x)\in[0,T]\times \R$. Then the corresponding Markovian value functions
$v_1$ and $v_2$ satisfy $v_1\leq v_2$.
\end{theorem}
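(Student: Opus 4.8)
The plan is to exploit the same comparison-theorem machinery that was used throughout the analysis of the value function, now applied to the two diffusions generated by $\psi_1$ and $\psi_2$. Recall that under the measure $\Q$ each conditional-mean process solves
\[
\vd \hat X^{(i)}_t = \sigma \psi_i(t,\hat X^{(i)}_t)\ud t + \psi_i(t,\hat X^{(i)}_t)\ud Z_t,
\]
driven by the \emph{same} $\Q$-Brownian motion $Z$, and the payoff in \eqref{E:VM} is the exponential functional $e^{\int_0^\tau \hat X_{t+s}\ud s}$, which is increasing in the trajectory of $\hat X$. So if I can produce, from a common starting point $(t,x)$, a pathwise ordering $\hat X^{(1)}_{t+s}\leq \hat X^{(2)}_{t+s}$ for all $s$, then for every admissible stopping time $\tau$ the functional for prior $\mu_2$ dominates that for $\mu_1$, and taking suprema gives $v_1(t,x)\leq v_2(t,x)$.

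The key step is therefore a stochastic comparison for one-dimensional SDEs driven by the same noise. Here I would invoke the same result the authors already rely on, namely Theorem~6.1 in \cite{ET08} (the comparison/monotonicity result used in the proofs of Theorem~\ref{T:VF} and Theorem~\ref{monotonicity}): since the hypothesis gives $\psi_1(t,x)\leq\psi_2(t,x)$ pointwise, both the drift coefficient $\sigma\psi_i$ and the diffusion coefficient $\psi_i$ are ordered with $\mu_1$'s coefficients below $\mu_2$'s. Under Assumption~\ref{assumpLip} (Lipschitz continuity in the spatial variable) the relevant SDEs have pathwise-unique strong solutions, so the comparison applies and yields the monotone dependence of the whole value function on the coefficient, hence $v_1\leq v_2$. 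Because the inequality $\psi_1\leq\psi_2$ is assumed on all of $[0,T]\times\R$ and the coefficients enter monotonically, this is a direct application rather than a fresh argument.

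One point that needs care is that the two problems are, strictly speaking, posed on possibly different state spaces $I_{\mu_1}$ and $I_{\mu_2}$, and the drift $\sigma\psi_i$ is not monotone in a single scalar parameter but rather the two dispersions are merely pointwise ordered. The cleanest way to avoid subtleties is to argue at the level of the Markovian value function $v_i$ as a functional of the coefficient $\psi_i$: the same Bermudan-approximation scheme used in Theorem~\ref{T:VF}(1)--(2) reduces the claim to a one-period comparison, where monotonicity of the transition operator in the coefficient is exactly what \cite[Theorem 6.1]{ET08} delivers, the needed structural condition $\pt_2^2\psi\geq -2/\sigma$ being guaranteed for each prior by Proposition~\ref{T:psigr}. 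Passing to the limit in the Bermudan approximations then transfers the inequality to $v_1\leq v_2$.

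The main obstacle I anticipate is the interface between the abstract comparison theorem and the present setup: verifying that the ordered pair $(\sigma\psi_1,\psi_1)$ and $(\sigma\psi_2,\psi_2)$ genuinely satisfies the hypotheses of \cite[Theorem 6.1]{ET08} on the common domain, including the behaviour at the (possibly different) endpoints of $I_{\mu_1}$ and $I_{\mu_2}$ and the fact that only the diffusion coefficients, not a single parameter, are being compared. Once that compatibility is confirmed, the ordering of the value functions follows, but I would expect the write-up to hinge on quoting the external comparison result in a form strong enough to handle simultaneously ordered drift and diffusion coefficients.
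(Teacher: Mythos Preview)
Your second route---arguing at the level of the value function via the Bermudan approximation and invoking \cite[Theorem~6.1]{ET08}---is exactly what the paper does (its proof is the single sentence ``Theorem~6.1 from \cite{ET08} can be applied to prove that the value function $v$ is increasing in $\psi$''). So in the end you land on the same argument.

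However, the first approach you sketch, the pathwise comparison $\hat X^{(1)}_{t+s}\leq \hat X^{(2)}_{t+s}$, is not valid and should be dropped rather than presented as motivation. Standard one-dimensional SDE comparison theorems (e.g.\ the Revuz--Yor result cited in Theorem~\ref{T:VF}) require the two equations to share the \emph{same} diffusion coefficient; only the drifts may be ordered. Here the diffusion coefficients $\psi_1$ and $\psi_2$ differ, and an ordering $\psi_1\leq\psi_2$ gives no pathwise control: on any excursion of $Z$ downward, the process with the larger dispersion moves down faster, so the trajectories will cross. Theorem~6.1 of \cite{ET08} is not a pathwise SDE comparison result; it is a PDE/convexity statement about the term-structure equation, and the mechanism by which a larger dispersion yields a larger value is that $x\mapsto v(t,x)$ is convex (Proposition~\ref{T:psigr}(2) supplying the needed structural hypothesis), so more noise increases the expectation. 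Your third paragraph captures this correctly; just be aware that it is not a cleaner repackaging of the pathwise idea but a genuinely different argument, and the only one that works.
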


\begin{proof}
Again, Theorem 6.1 from \cite{ET08} can be applied to prove that the value function $v$ is increasing in $\psi$.
\end{proof}

In the case of the normal prior, the function $\psi(t,x) = \frac{\sigma\gamma^2}{\sigma^2+t\gamma^2}$ is monotonically increasing in the standard deviation $\gamma$ of the prior. Hence Theorem \ref{T:PrM} applies and the Markovian value function $v$ increases in $\gamma$. A consequence of this is that optimal stopping boundaries are ordered by the size of $\gamma$ as shown in Figure \ref{F:GB}.

\begin{SCfigure}[][h] \label{F:GB}
\includegraphics[width=0.65\textwidth,clip, trim=6mm 8mm 15mm 0mm]{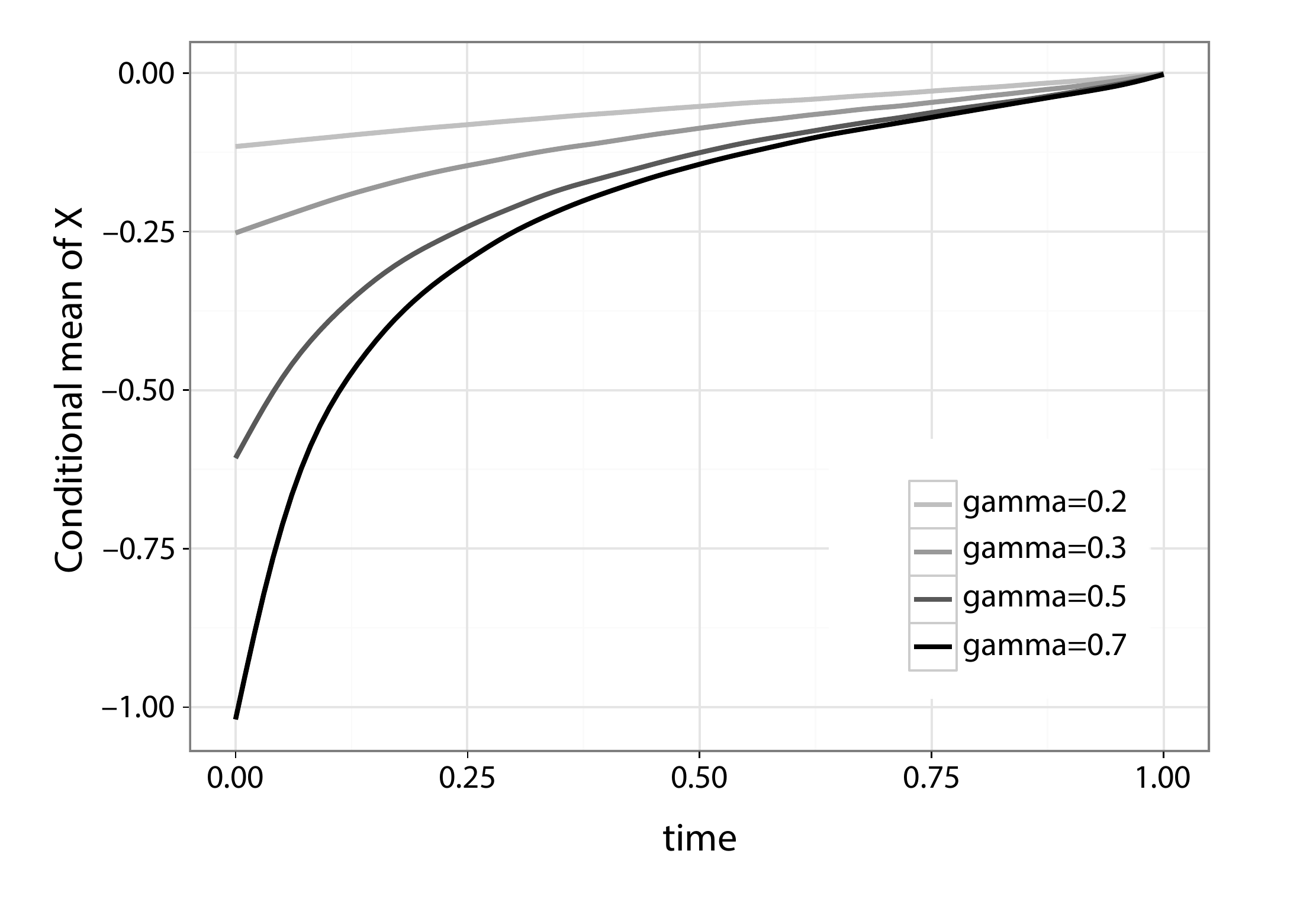}
\caption{Optimal stopping boundaries for different values of standard deviation $\gamma$ in the case of a normal prior when the market volatility $\sigma=0.2$.}
\end{SCfigure}

For compactly supported distributions, Theorem \ref{T:PrM} offers a way to construct an upper bound for the Markovian value function $v$. Suppose the prior $\mu$ is a compactly supported distribution. Since $\psi_{\mu}$ is bounded, by the two-point prior example on page \pageref{Ex:B}, we can find a two-point distribution $\eta := (1-\pi)\delta_{a} + \pi \delta_{b}$ with $\int_{\R} u \mu(\ud u) = \int_{\R} u \eta(\ud u)$ such that $\psi_{\mu} \leq \psi_{\eta}$. Then Theorem \ref{T:PrM} yields that $v_{\mu} \leq v_{\eta}$ and so the stopping boundaries satisfy $h_{\eta} \leq h_{\mu}$. As a result, $V_{\mu} \leq \E_{\eta}[S_{\tau_{h_{\eta}}}]$, where $V_{\mu}$ denotes the initial value for the prior $\mu$, $\E_{\eta}$ denotes the expectation operator under which the prior is $\eta$ instead of $\mu$.


\subsection{The value of filtering: numerical investigation}

Having introduced and solved the optimal liquidation problem for an arbitrary prior, a pragmatic question arises - how much is there to be gained from the elaborate sequential liquidation strategy with real-time filtering in comparison to a naive optimal selling strategy without filtering? In this part, we address the question within our model from a numerical point of view in the normal prior case.

Let us consider an agent who wants to liquidate an asset before time $T$. We suppose the asset price evolves according to \eqref{E:S} and that the agent's prior $\mu$ for the drift $X$ is the normal distribution $N(m, \gm)$. If the agent does not know about the possibility of real-time filtering, he does not utilise any valuable information from the asset price observations and so, at time $0$, will make a decision whether to sell immediately, i.e.~at time $0$, or liquidate at time $T$. The expected value from an optimal liquidation strategy with selling allowed only at times $0$ and $T$ is 
\[
V_{\{0,T\}}:=\E[e^{XT}]\vee 1= e^{mT + \frac{1}{2}(\gm T)^{2}}\vee 1\,.
\]
However, if the agent is aware of an optimal sequential liquidation procedure involving filtering, the expected value from an optimal selling is
 \[
V=\sup_{\tau \in \mathcal{T}_T^{S} } \E[S_\tau].
\]
It is worth noting that the inequality $V_{\{0,T\}}\leq V$ holds for any prior $\mu$, thus giving us a lower bound for the value $V$.

In Figure \ref{F:VC}, we see the two values $V_{\{0,T\}}$ and $V$ calculated for two different priors at a range of different market volatilities. In addition, Figure \ref{F:Imp} depicts the percentage improvement $(V-V_{\{0,T\}})/V_{\{0,T\}}$ that the sequential procedure with filtering brings over the naive strategy. Dependence of the two values on the standard deviation $\gamma$ of the normal prior is illustrated in Figure \ref{F:DG}.


\begin{SCfigure}[][h] \label{F:VC}
\includegraphics[width=0.6\textwidth, clip, trim=10mm 0mm 14mm 0mm]{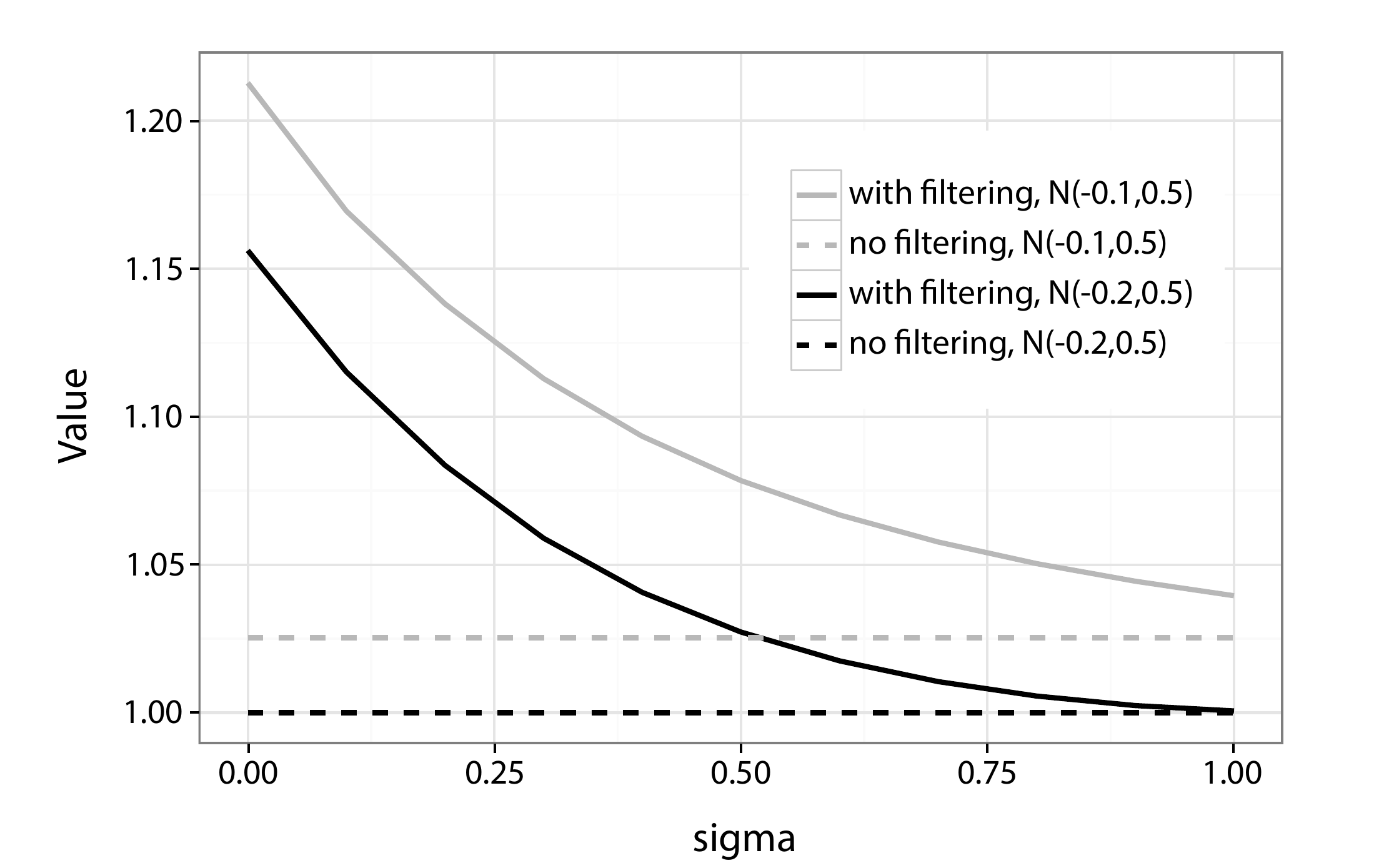}
\caption{{\bf Initial value as a function of market volatility.} The solid gray curve corresponds to the optimal liquidation value and the dashed gray line to the value without filtering - both for the normal prior $N(-0.1, 0.5)$. Similarly, the solid black curve corresponds to the optimal liquidation value and the dashed black line to the value without filtering for the normal prior $N(-0.2, 0.5)$. }
\end{SCfigure}

\begin{SCfigure}[][h!] \label{F:Imp}
\includegraphics[width=0.6\textwidth, clip, trim=8mm 11mm 17mm 0mm]{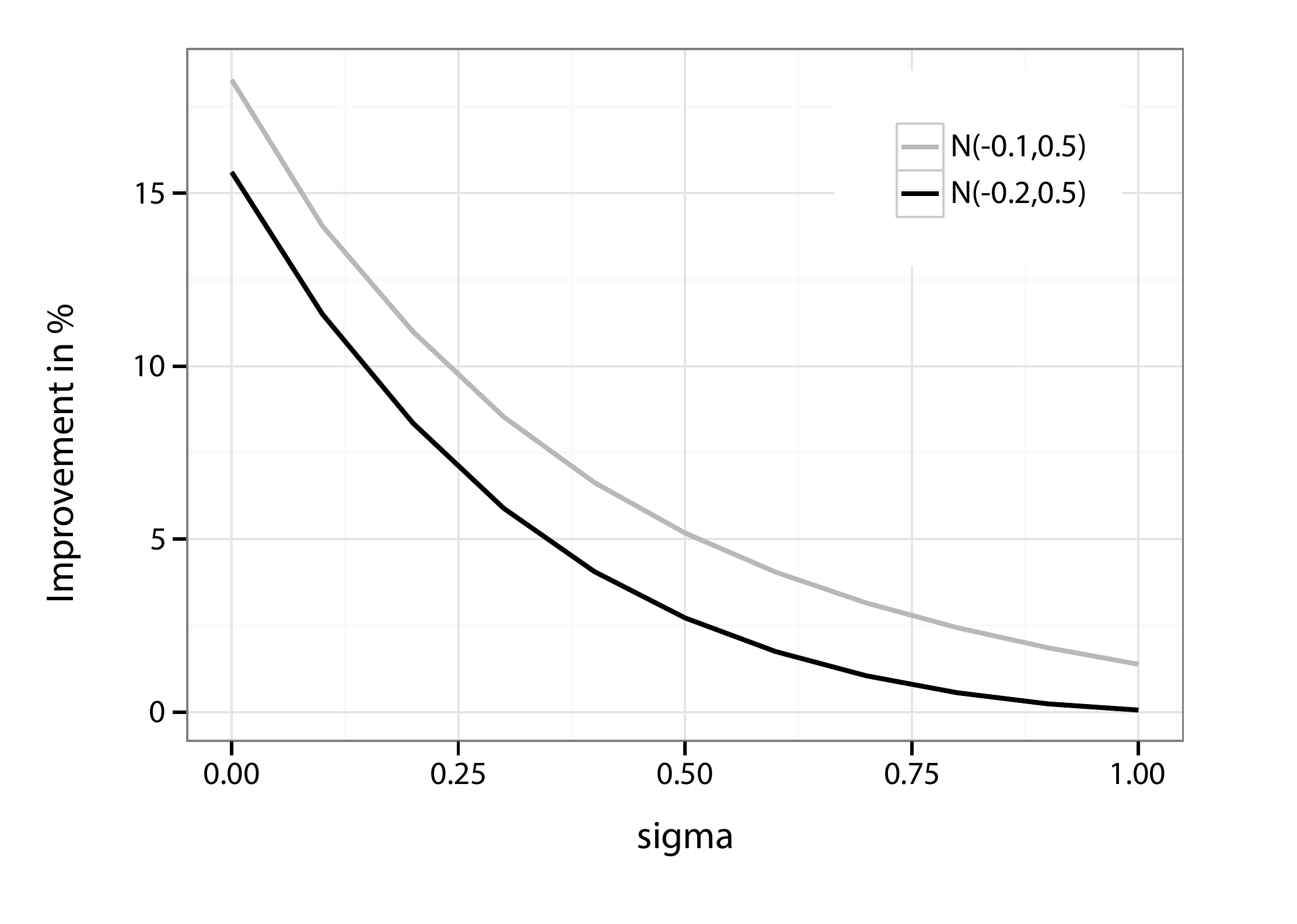}
\caption{ {\bf Improvement due to filtering}. The percentage improvement $(V-V_{\{0,T\}})/V_{\{0,T\}}$ over the strategy without filtering.}
\end{SCfigure}

\begin{SCfigure}[][h!] \label{F:DG}
\includegraphics[width=0.6\textwidth, trim=10mm 5mm 16mm 2mm, clip]{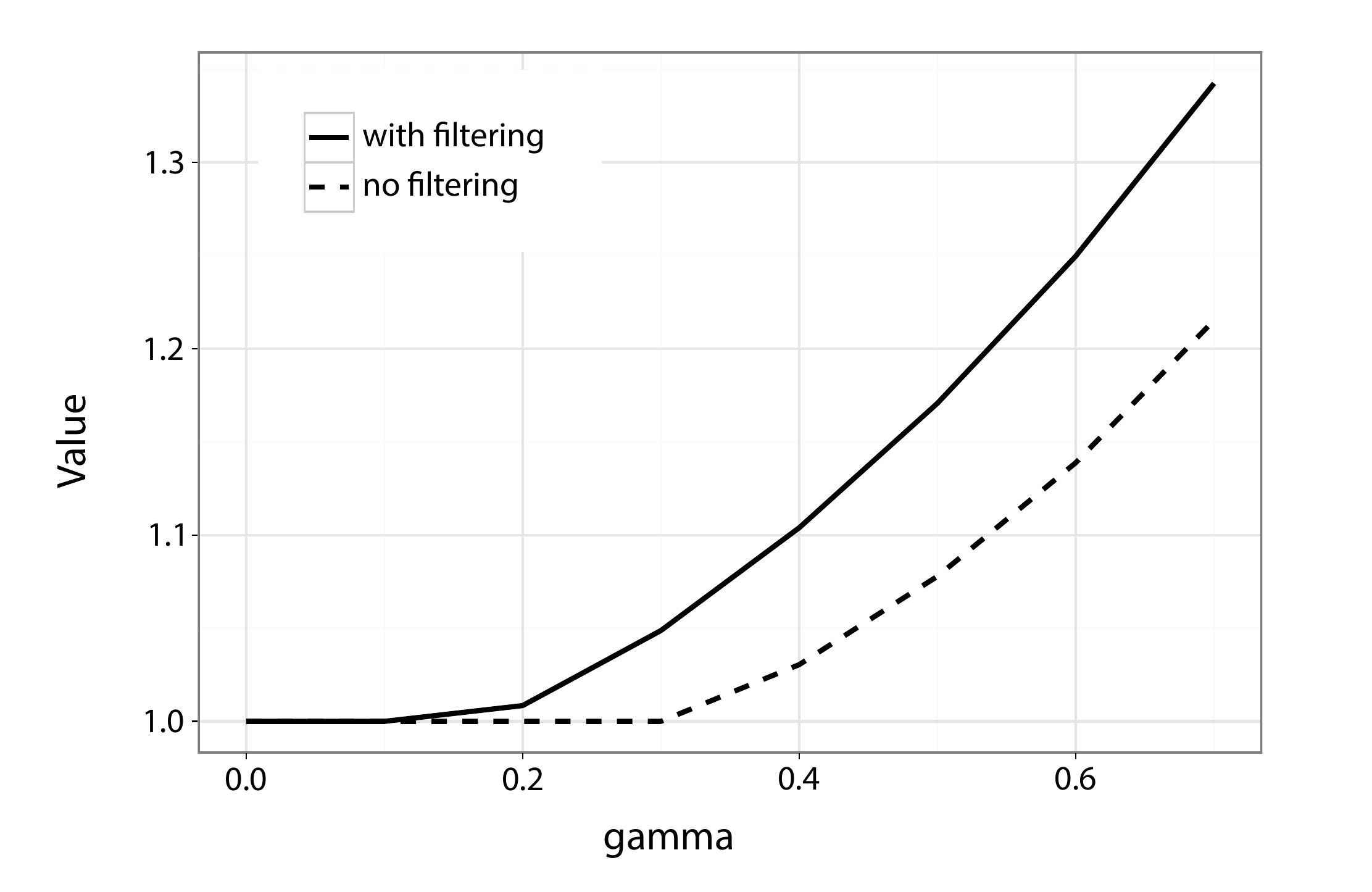}
\caption{{\bf Initial value as a function of standard deviation of the normal prior.} The solid curve corresponds to the optimal liquidation value while the dashed line to the value without filtering. Here the prior is normal with mean $-0.05$, the market volatility $\sigma=0.2$.}
\end{SCfigure}

\newpage

\end{document}